\documentclass{article}
\usepackage[a4paper, total={6in, 8in}]{geometry}
\usepackage{graphicx} 
\usepackage{amsmath}
\usepackage{amssymb}
\usepackage{natbib}
\usepackage{float}
\usepackage{algorithm}
\usepackage{algpseudocode}
\usepackage{xcolor}
\usepackage{booktabs}
\usepackage{multirow}
\usepackage{url}

\usepackage{comment}
\usepackage{cleveref}
\usepackage{amsthm}

\newcommand{\ind}{\perp\!\!\!\!\perp}

\newcommand{\E}[1]{\mathbb{E}\left[ #1 \right]}

\newtheorem{theorem}{Theorem}
\newtheorem{assumption}{Assumption}
\newtheorem{remark}{Remark}
\newtheorem{lemma}{Lemma}

\crefname{example}{Example}{Examples}
\crefname{definition}{Definition}{Definitions}
\crefname{proposition}{Proposition}{Propositions}
\crefname{theorem}{Theorem}{Theorems}
\crefname{assumption}{Assumption}{Assumptions}
\crefname{remark}{Remark}{Remarks}
\crefname{lemma}{Lemma}{Lemmas}

\usepackage{authblk}

\title{Inference on counterfactual distributions using martingale posteriors}
\author{Gregor Steiner and Mark Steel}
\affil{Department of Statistics, University of Warwick}

\begin{document}

\maketitle

\begin{abstract}
Causal inference is often focused on average effects, which can hide important aspects of the effect distributions. Here we consider the entire posterior effects distribution by estimating full counterfactual outcome distributions. We propose a methodology for inference on counterfactual distributions which builds upon the martingale posterior framework of \cite{fong_martingale_2023}. This provides a highly flexible approach to estimating densities, distribution functions, and derived quantities such as quantiles, which coherently quantifies the  epistemic uncertainty on any target estimand of interest. As the predictive recursions are based on an underlying nonparametric model (a Dirichlet process mixture model), our method naturally inherits robustness with respect to restrictive parametric assumptions. In addition, implementation of our method is typically very fast. This approach can be applied to marginal or conditional counterfactual distributions and is easily extended to an instrumental variables setup. Using the concept of almost conditionally identically distributed random variables, we prove convergence of the martingale posterior  inference on the counterfactual outcome distributions for the causal models considered in the paper.  We illustrate our approach on both simulated and real data. Using the latter, we investigate  the effect of zinc lozenges on common cold duration, the impact of vitamin A supplementation on children's survival rates with one-sided non-compliance \citep[analysed in][]{imbens_bayesian_1997} and the effect of job training \citep{lalonde_evaluating_1986}.
\end{abstract}

\section{Introduction}

Causal inference traditionally focuses on mean effects. A popular estimand is the average treatment effect, which quantifies the difference between mean outcomes of a treated and a control population. While this can be a useful summary, relying on mean effects alone may miss important consequences of the treatment. A treatment might benefit some individuals while harming others to the same degree, producing a bimodal outcome distribution with an unchanged mean. It might instead have a ``lottery'' effect, raising the probability of extreme positive outliers while lowering outcomes for the majority. We illustrate two such scenarios in \Cref{fig:motivating_example}. In both, mean-based inference would find no effect, whereas estimating counterfactual distributions reveals the true underlying situation.

\begin{figure}
    \centering
    \includegraphics[width=0.9\linewidth]{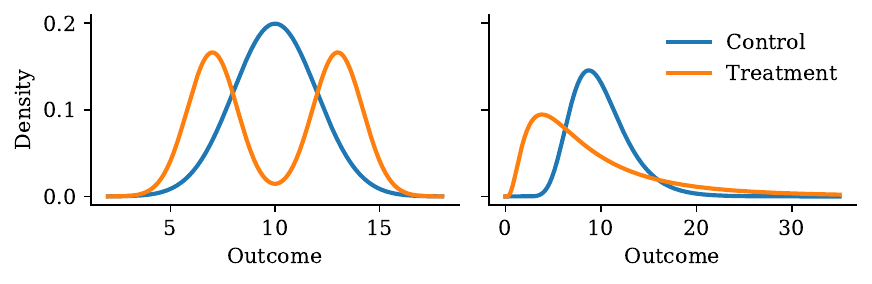}
    \caption{The left panel illustrates a polarisation scenario where a unimodal control distribution shifts to a bimodal treatment distribution. The right panel illustrates a lottery scenario, where treatment increases skewness. In both cases, the mean remains identical, demonstrating how mean-based estimators can mask significant treatment-induced heterogeneity.}
    \label{fig:motivating_example}
\end{figure}

Much of the earlier literature on distributional causal inference studies counterfactual outcomes through cumulative distribution functions or quantiles. \cite{chernozhukov2013inference} develop estimators and inference for counterfactual distributions using quantile regression or related regression-based decompositions, and this line was extended to formal inference for distribution and quantile functions in treatment effect models \citep{donald2014estimation} and to continuous-treatment settings \citep{ai2022estimation}.

More recent work targets the counterfactual density directly. On the frequentist side, this includes semiparametric and doubly robust estimators \citep{kennedy2023semiparametric}, shape-constrained inference under log-concavity \citep{ham2024doublyrobustestimationinference}, kernel-Stein based estimators \citep{martinez2024counterfactual}, and deep generative approaches such as interventional normalizing flows \citep{melnychuk2023normalizing}. These frequentist and machine-learning methods are often geared toward efficient point estimation and asymptotic inference, while epistemic uncertainty quantification is less central or less fully developed, with \cite{ham2024doublyrobustestimationinference} being a notable exception. The Bayesian literature, by contrast, more naturally propagates uncertainty by placing flexible models on the joint or conditional outcome distribution and then treating densities, quantiles, and other distributional features as posterior functionals \citep[e.g.][]{roy2018bayesian, xu2022bayesiansemiparametricmethodestimating}. A recurring feature across these strands is that a given method typically targets one distributional representation and one estimand at a time.

In instrumental variable (IV) settings with unobserved confounding, previous work has also focused mainly on quantiles and distribution functions \citep[e.g.][]{chernozhukov_iv_2005, kook_instrumental_2025}. \cite{jung2021} construct double machine learning estimators for complier interventional distributions. More recently, \cite{holovchak_distributional_2025} study the entire interventional distribution in a general IV setting by first learning the conditional distribution through energy-score minimisation and then drawing samples from the structural model.

We propose a method for inference on counterfactual distributions grounded in the martingale posterior framework \citep{fong_martingale_2023}. Epistemic uncertainty on densities, distribution functions, or any other functional can be coherently quantified through the martingale posterior samples. We rely on flexible, nonparametric prediction rules, so the method avoids restrictive parametric assumptions and inherits a corresponding robustness. Our method works under standard unconfoundedness assumptions and extends to instrumental variable designs with unobserved confounding. In both cases, marginal and conditional counterfactual distributions can be targeted. On the theoretical side, we establish convergence of the predictive recursions. To the best of our knowledge, this is the first application of martingale posteriors to causal counterfactual distributions. \cite{melnychuk_frequentist_2026} also use martingale posteriors in a causal context, but to analyse the frequentist consistency of prior-data fitted networks in order to calibrate uncertainty for the average treatment effect, whereas we consider full counterfactual distributions.

Our approach performs well on simulated data. We apply our method to real data in order to investigate the effect of zinc lozenges on common cold duration, the impact of vitamin A supplementation on children's survival rates with one-sided non-compliance \citep[analysed in][]{imbens_bayesian_1997} and the effect of job training \citep{lalonde_evaluating_1986}. The implementation of our method is typically very fast, and the code is freely available at \url{https://github.com/gregorsteiner/CausalMP}.

\section{Preliminaries}

\subsection{Setting and notation}

Let $(Y, X, W) \sim P$ be random variables drawn from a joint distribution $P$, where $Y$ is the outcome, $X$ is the treatment (or endogenous variable), and $W$ is a vector of (exogenous) control variables. Throughout, we assume $P$ admits a density\footnote{This will denote probability density functions for dimensions with continuous distributions and  probability mass functions for those with discrete distributions.} and use the generic symbol $p$ for all density functions, where the arguments specify the variables of interest. For instance, the joint density can be written as $p(y, x, w) = p(y \mid x, w) p(x, w)$. Throughout, we use capital letters ($Y, X, W$, and later $Z$) for random variables and lower-case letters ($y, x, w, z$) for their realisations, whether these are observed data, generic density arguments, or values drawn within a specific predictive sequence. Let $\mathcal{D}_{1:n} = \{(y_i, x_i, w_i) \}_{i=1}^n$ be the observed dataset of size $n$, where we sometimes drop the subscripts for convenience; we reserve $\mathcal{D}$ (with a range subscript) to denote such a collection of realised tuples. The main interest is in a parameter or estimand $\theta \in \Theta$ that is a (potentially infinite-dimensional) functional of the distribution $P$ and we sometimes write $\theta(P)$ to emphasise this dependence. For example, we may target the interventional distribution which (under certain causal assumptions) can be expressed as
\begin{align*}
    \theta \equiv p(y(x)) = \int p(y \mid x, w) p(w) \mathrm{d}w.
\end{align*}
We take a Bayesian (or quasi-Bayesian) perspective and consider the observed data $\mathcal{D}_{1:n}$ as fixed. The aim is to perform posterior inference on $\theta$ conditional on $\mathcal{D}_{1:n}$.

\subsection{Martingale posterior distributions}

We adopt a predictive resampling perspective \citep{fortini_quasi-bayes_2020, fong_martingale_2023}, presented here for a generic (possibly vector-valued) data point $y$. The primary source of uncertainty is the failure to observe the missing observations $y_{n+1:\infty}$. Under this framework, addressing uncertainty involves imputing these unobserved data points by modelling their joint predictive distribution
$$
p(y_{n+1:\infty} \mid y_{1:n}) = \prod_{i=n+1}^\infty p(y_i \mid y_{1:i-1}).
$$
In practice, we truncate the sequence at a large but finite population size $N \gg n$, which may correspond to a known population size or reflect computational limitations. Rather than specifying an explicit likelihood and prior, we instead directly define one-step predictive updates $$p_i(y_{i+1}) = p(y_{i+1} \mid y_{1:i}),$$which are used to impute the missing observations. These predictive sequences are specified in a recursive manner through an update rule  $\phi_i$ such that for all $i=0, 1, \ldots$
\begin{align*}
    p_{i+1}(y) = \phi_i\left(p_i(y), y_{i+1} \right).
\end{align*}

The quantity of interest $\theta = \theta(P)$ is a functional of the population distribution $P$. The usual Bayesian approach would be to put a prior on $P$, either nonparametric or via a parametric family, and update it to its posterior distribution. Then, drawing from this posterior and computing $\theta(P)$ yields a posterior draw. The martingale posterior bypasses the need for a prior and likelihood by constructing a predictive sequence whose empirical distribution $P_N$ approximates the population distribution $P$. A posterior draw of $\theta$ is then obtained by computing $\theta(P_N)$ based on the generated empirical distribution. By repeating this process and generating multiple empirical distributions, we obtain a martingale posterior for the parameter of interest. The martingale posterior is equivalent to the standard Bayesian posterior when using the corresponding posterior predictive as the predictive rule to impute the missing observations. 

The martingale posterior is well-defined only if the sequence of predictive densities converges to a limiting law $P_\infty$. A sufficient condition for this is that the sequence forms a martingale \citep{fong_martingale_2023}, that is, for all $i = n+1, n+2, \ldots$
\begin{align*}
    \mathbb{E}\left[ p_i(y) \mid y_{1:i-1} \right] = \int p_i(y) \, p_{i-1}(y_i) \, \mathrm{d}y_i = p_{i-1}(y).
\end{align*}
This martingale condition on the predictive densities is equivalent to the statement that the imputed sequence $y_{n+1}, y_{n+2}, \ldots$ is \emph{conditionally identically distributed} (c.i.d.) given the observed data \citep{berti2004limit}. A c.i.d. sequence is \emph{asymptotically exchangeable} and its predictive distributions converge almost surely to a random limiting measure $P_\infty$. The condition is sufficient but not necessary, and for certain predictive rules, such as deep neural networks, verifying it theoretically is currently out of reach. Often it is enough for the sequence to be only \emph{almost} c.i.d. (a.c.i.d.), that is, to satisfy the martingale identity up to summable errors: \citet{battiston2025bayesianpredictiveinferencemartingales} show that such sequences remain asymptotically exchangeable and still admit a well-defined limiting measure. Predictive rules that do not satisfy a martingale condition may thus still induce a valid posterior, and empirical convergence diagnostics can be used to assess this in practice \citep{ng_tabmgp_2026}. In particular, monitoring the distance between $\theta(P_n)$ and the forward-sampled estimate $\theta(P_N)$ as $N$ increases is informative: stabilisation of this quantity at a non-zero constant suggests  empirically that $\theta(P_\infty)$ is well-defined, while systematic drift or divergence is a warning sign. We discuss concrete predictive rules and their properties with respect to this condition below. In addition, we formally show convergence of the martingale posterior for the causal models we introduce in the sequel. 

\subsection{Predictive rules} \label{sec:predictive_rules}

\subsubsection{The Bayesian bootstrap}

After observing data $y_{1:n}$, the Bayesian bootstrap \citep{rubin1981bayesian} characterises a random distribution through the cdf $P(y) = \sum_{i=1}^n \omega_i 1\{ y \leq y_i \}$ with weights $\omega_i \sim \mathrm{Dirichlet}(1, \ldots, 1)$. An equivalent P\'olya urn interpretation will be more useful for our purposes. Defining $P_n(y) = n^{-1} \sum_{i=1}^n 1\{ y \leq y_i \}$ as the empirical cdf, we can write the recursion for $i \geq n$
$$
P_{i+1}(y) = \frac{i}{i+1} P_i(y)  + \frac{1}{i +1 } 1\{ y \leq y_{i+1} \}.
$$
The predictive distribution is updated by drawing a point from the empirical distribution and ``reinforcing the urn'' by treating it as a new observation. For $i \to \infty$, these empirical proportions converge to the Dirichlet weights. Thus, recursively resampling with replacement treating each resampled point as a new observation yields a posterior draw from the Bayesian bootstrap. Because the resulting predictive distribution has atomic support, it is generally inappropriate as a predictive for a variable whose own distribution is continuous, since we typically want that predictive to respect the true support. It remains a convenient choice, however, when the variable being updated is not itself the target but is instead marginalised out to obtain some other functional of interest. Computing such a functional averages across the atomic support of the bootstrap predictive and smooths it out, so that it matters little whether the marginalised variable is discrete or continuous. The same reasoning does not apply when the variable of interest is continuous and its predictive distribution is the target itself, which is why we turn to the copula-based update below.

\subsubsection{Recursive copula updates}

As a continuous extension of the Bayesian bootstrap, we consider the copula update inspired by the nonparametric Dirichlet process mixture model \citep{EscobarWest95} as  proposed in \cite{hahn_recursive_2018} and further developed in \cite{fong_martingale_2023}. For a scalar outcome $y \in \mathbb{R}$ this recursive update of the predictive density $p_i(y)$ and corresponding cdf $P_i(y)$  is given by
\begin{align} \label{eq:copula_update}
\begin{aligned}
    p_{i+1}(y) &= (1- \alpha_{i+1}) p_i(y) + \alpha_{i+1} c_\rho (P_i(y), P_i(y_{i+1})) p_i(y) \\
    P_{i+1}(y) &= (1- \alpha_{i+1}) P_i(y) + \alpha_{i+1} H_\rho (P_i(y), P_i(y_{i+1})),
\end{aligned}
\end{align}
where $c_\rho$ is the bivariate Gaussian copula density and $H_\rho$ is the conditional Gaussian copula. The correlation parameter $\rho \in (0, 1)$ takes on the role of a bandwidth. As in \cite{fong_martingale_2023}, we choose $\rho$ by maximising the prequential log-score $\sum_{i=1}^n \log p_{i-1}(y_i)$. The choice of the weights $\alpha_i$ is crucial for reliable uncertainty quantification. Again, we follow \cite{fong_martingale_2023} and set $\alpha_i = (2-1/i)/(i+1)$ (see their online Appendix E.1.1 for details).

In practice, we choose an initial distribution $p_0$, typically a standard Gaussian. Then, we implement the forward steps on the observed data points $y_{1:n}$ to obtain the predictive distribution $p_n$. Based on $p_n$, we independently generate $B$ predictive sequences by recursively generating new observations $y_{i+1}$ for $i = n, n+1, \ldots$ and updating $p_i$ to $p_{i+1}$. There is no need to actually draw new observations as $Y_{i+1} \sim P_i$ implies $P_i(Y_{i+1}) \sim \mathrm{U}(0, 1)$. Thus, it is sufficient to draw uniform random variables $V_i \sim \mathrm{U}(0, 1)$, making the update computationally convenient.

\cite{fong_martingale_2023} extend the copula update to a regression setting with covariates $x \in \mathbb{R}^d$. The $\alpha_i$ weights in \Cref{eq:copula_update} are replaced by covariate dependent weights
\begin{align*}
    \alpha_{i}(x, x') = \frac{\alpha_i \prod_{j=1}^d c_{\rho_x}(\Phi(x^j), \Phi(x_j'))}{1 - \alpha_i + \alpha_i \prod_{j=1}^d c_{\rho_x}(\Phi(x^j), \Phi(x_j'))},
\end{align*}
where $\Phi$ denotes the standard Gaussian cdf and $\alpha_i$ are chosen as before. It is recommended to standardise the covariates for good performance. The additional bandwidth parameters $\rho_x$ can also be chosen to maximise the prequential log-score.

An important choice is the starting distribution $p_0$ that initiates the recursive updates. This needs to be specified by the analyst based on prior knowledge. A convenient option is standardising the response and adopting the standard Gaussian $p_0 = \mathrm{N}(y \mid 0, 1)$. Following \cite{fong_martingale_2023}, this is our default choice, even in the regression case. However, encoding prior knowledge through alternative choices is possible. One downside is that the copula recursion depends on the order of the observed data $y_{1:n}$. As in \cite{fong_martingale_2023}, we average the initial density fit over $M$ random permutations of $y_{1:n}$. We find that $M=10$ leads to stable results in our examples.

\subsubsection{Parametric predictive densities}

Assume the data are generated from a model $y_i \sim p_\psi,  i=1,\dots,n$ parameterised by $\psi \in \Psi$. A natural strategy starts from a plug-in estimate $\Hat{\psi}_n$ (e.g.\ the maximum likelihood estimate), sets $\psi_n = \Hat{\psi}_n$, and then alternates between drawing the next observation from the current plug-in predictive $y_i \sim p_{\psi_{i-1}}$ and updating the parameter via the natural gradient
\begin{align*}
    \psi_i = \psi_{i-1} + i^{-1}\, \mathcal{I}(\psi_{i-1})^{-1}\, s(\psi_{i-1}, y_i), \quad i = n+1, \ldots, N,
\end{align*}
where $s(\psi, y) = \nabla_\psi \log p_\psi(y)$ is the score function and $\mathcal{I}(\psi) = \E{s(\psi, y) s(\psi, y)^\intercal }$ is the Fisher information matrix. The parametric updating rule satisfies the martingale property as $$\E{s(\psi_{i-1}, y_i) \mid y_{1:i-1}} = 0$$
by construction. It is worth emphasising that the martingale property here holds for the \emph{parameter} sequence $\psi_i$, driven by the mean-zero score increments, and not for the predictive density $p_{\psi_i}$ itself, which is a non-linear function of the parameter and therefore not in general a martingale. Under some regularity conditions, the induced sequence of predictive densities is a.c.i.d., so that asymptotic exchangeability still holds and the martingale posterior remains well-defined \citep{battiston2025bayesianpredictiveinferencemartingales}. The learning rate of $i^{-1}$ is chosen to satisfy
\begin{align*}
    \sum_{i=n+1}^\infty i^{-1} = \infty, \quad \sum_{i=n+1}^\infty i^{-2} < \infty.
\end{align*}
Under these conditions, \citet{fong2026asymptotics} establish a predictive central limit theorem and a Bernstein-von Mises result for this class of parametric martingale posteriors.

The natural gradient $ \mathcal{I}(\psi)^{-1}\, s(\psi, y)$ is the efficient influence function for $\psi$ in the parametric model. This suggests extending the construction to semi- and nonparametric problems by replacing $\mathcal{I}(\psi)^{-1}\, s(\psi, y)$ with the efficient
influence function of the target functional, which may yield a predictive, martingale-posterior interpretation of influence function-based estimators in causal inference. We leave a detailed treatment to future work.

\section{Martingale posterior counterfactual density inference} \label{sec:mp_counterfactual}

In causal inference, there is often a discrepancy between the ``observational distribution'' that generated the data and the ``interventional distribution'' of interest. This mismatch makes it challenging to directly apply the standard martingale posterior. To make this precise, we adopt the potential outcomes framework and define $Y(x)$ as the potential outcome that would be observed if an individual were assigned treatment level $X=x$. In this setup, for each unit $i$ we observe the realised outcome $y_i = Y_i(x_i)$ corresponding to the assigned treatment level $x_i$, together with the treatment assignment $x_i$ and covariates $w_i$. Because we can never simultaneously observe $Y_i(x)$ for multiple values of $x$ on the same individual, we need additional assumptions to identify the interventional distributions of interest. In this section, we impose the standard assumptions of consistency, unconfoundedness and overlap.

\begin{assumption}[Consistency]
    The observed outcome is consistent with the treatment assignment in the sense that $Y = Y(x)$ if $X=x$.
\end{assumption}
\begin{assumption}[Unconfoundedness]
    The potential outcomes are independent of the treatment assignment conditional on the observed covariates, $Y(x) \ind X \mid W$ for all $x$.
\end{assumption}
\begin{assumption}[Overlap]
    For every covariate level, there is non-zero probability of receiving every treatment level, that is, $p(x \mid w) > 0$ for all values of $x$ and $w$ in the support of $X$ and $W$.
\end{assumption}
When these assumptions hold, the marginal interventional distribution is identified
by averaging the conditional outcome distribution over the covariate distribution,
\begin{align} \label{eq:identification}
    p(y(x)) = \int p(y(x) \mid w) p(w) \mathrm{d}w
            = \int p(y \mid x, w) p(w) \mathrm{d}w.
\end{align}
We approach inference on $p(y(x))$ from a predictive Bayesian perspective, targeting
the predictive interventional distributions $p_N(y(x))$ for every treatment level $x$
of interest. Our proposed algorithm proceeds in two main steps.

First, we build a predictive distribution for the observational model $p(y, x, w)$.
This is done recursively: at each iteration we draw a new observation from the
current predictive and then update the predictive in light of the new observation. A convenient choice is to factorise it into a conditional
outcome model $p(y \mid x, w)$, a propensity model $p(x \mid w)$, and a marginal
covariate distribution $p(w)$, and to assign a separate predictive update to each
factor. For example, we may model the continuous outcome $p(y \mid x, w)$ with the
conditional copula regression of \citet{fong_martingale_2023}, the propensity
$p(x \mid w)$ with a parametric logistic-regression update, and use the Bayesian
bootstrap for the covariate distribution $p(w)$. However, the construction is not tied to these particular choices, and any valid predictive update may be used for each factor.

Second, we recover the interventional distribution by marginalising the outcome
model over the covariate distribution 
\begin{align*}
    p_N(y(x)) = \int p_N(y \mid x, w) p_N(w) \mathrm{d}w,
\end{align*}
where $p_N(y \mid x, w)$ and $p_N(w)$ denote the final predictive distributions obtained through recursive updating. For flexible nonparametric choices of these predictive distributions, the integral will typically not be available in closed form, and numerical integration can be challenging when $W$ is high-dimensional. Therefore, we typically model the covariate distribution with the Bayesian bootstrap. Using the resampled covariate values $\{w_i\}_{i=1}^N$ generated during the forward recursion, we can approximate the integral by the Monte Carlo average
\begin{align*}
    p_N(y(x)) \approx \frac{1}{N} \sum_{i=1}^N p_N(y \mid x, w_i).
\end{align*}
This approximation requires evaluating the outcome predictive $p_N(y \mid x, w)$ at each resampled covariate value. Such pointwise evaluation is available for the conditional copula update, whose recursive scheme returns the conditional density at any conditioning value, and for parametric plug-in models, which provide the fitted conditional density in closed form.
Running $B$ independent predictive sequences and applying both steps to each yields
a martingale posterior sample of $B$ such interventional distributions, as summarised in
\Cref{alg:Causal_MP}.

\begin{algorithm}
\caption{Martingale Posterior Sampling for Interventional Distributions}
\label{alg:Causal_MP}
\begin{algorithmic}[1]
\Require Observed data $\mathcal{D}_{1:n} = \{(y_i, x_i, w_i)\}_{i=1}^n$, target treatment level $x^*$, number of sequences $B$, length of predictive sequences $N$
\State Initialise predictive rule $p_n$ from $\mathcal{D}_{1:n}$
\For{$b = 1, \ldots, B$}
    \For{$i = n+1,\ldots,N$}
        \State Draw $(y_i, x_i, w_i) \sim p_{i-1}$
        \State Update predictive rule $p_i \gets \text{Update}(p_{i-1}, (y_i, x_i, w_i))$
    \EndFor
    \State $p_N^{(b)}(y(x^*)) \gets \frac{1}{N} \sum_{i=1}^N p_N(y \mid X=x^*, W=w_i)$
\EndFor
\State \Return $\left\{ p_N^{(1)}(y(x^*)), \dots, p_N^{(B)}(y(x^*)) \right\}$
\end{algorithmic}
\end{algorithm}

The validity of \Cref{alg:Causal_MP} relies on the sequence of interventional predictive densities $p_N(y(x))$ converging to a well-defined limiting measure. This is not immediate: although $p_i(y \mid x, w)$ and $p_i(w)$ are each individually martingales in $i$ by construction, the integral over their product need not be. The following result shows that convergence nonetheless holds under mild regularity conditions. The proof, together with the precise setup and regularity conditions, is given in \Cref{app:proof_convergence}. 

\begin{theorem}[Convergence of the martingale posterior interventional density] \label{thm:convergence}
    Suppose the predictive updates for $p_i(y \mid x, w)$ and $p_i(w)$ satisfy the martingale property of \Cref{sec:predictive_rules}, and that the outcome density and the copula densities used in the updates are uniformly bounded (see \Cref{app:proof_convergence} for precise statements). Then, for every treatment level $x$, there exists a random probability measure $P_\infty(y(x))$ such that $P_i(y(x))$ converges weakly to $P_\infty(y(x))$ almost surely as $i \to \infty$.
\end{theorem}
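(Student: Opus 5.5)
The plan is to show that the sequence of interventional predictives is almost conditionally identically distributed in the sense of \citet{battiston2025bayesianpredictiveinferencemartingales}, and then invoke their result that a.c.i.d.\ sequences have predictive distributions converging weakly almost surely to a random limit. Fix $x$ and a bounded measurable $f$, and write
\begin{align*}
M_i(f) = \int\!\!\int f(y)\, p_i(y \mid x, w)\, p_i(w)\, \mathrm{d}y\, \mathrm{d}w
\end{align*}
for the functional of interest; equivalently, work pointwise in $y$ with the density $p_i(y(x))$. The target is the a.c.i.d.\ condition: $\mathbb{E}[M_{i+1}(f) \mid \mathcal{F}_i] - M_i(f) = \epsilon_i$ with $\sum_i \sup|\epsilon_i| < \infty$ almost surely. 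Here $\mathcal{F}_i$ is generated by the observed data and the imputed tuples up to step $i$.

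\textbf{Step 1: decompose the one-step increment.} Write $p_{i+1}(y\mid x,w) = p_i(y\mid x,w) + \Delta^Y_i(y,w)$ and $p_{i+1}(w) = p_i(w) + \Delta^W_i(w)$. The increment of $p_i(y(x))$ then splits into three terms:
\begin{align*}
\int \Delta^Y_i\, p_i(w)\,\mathrm{d}w + \int p_i(y\mid x,w)\,\Delta^W_i\,\mathrm{d}w + \int \Delta^Y_i\,\Delta^W_i\,\mathrm{d}w .
\end{align*}
Take conditional expectations given $\mathcal{F}_i$, under which $(y_{i+1},x_{i+1},w_{i+1})\sim p_i$. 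The first two terms have zero conditional mean by the individual martingale properties, applied pointwise in $w$ and $y$ and interchanged with the integral by Fubini using boundedness. So only the cross term $\mathbb{E}[\int \Delta^Y_i \Delta^W_i \,\mathrm{d}w \mid \mathcal{F}_i]$ survives.

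\textbf{Step 2: bound the cross term.}
\begin{itemize}
\item For the copula regression, $|\Delta^Y_i(y,w)| \le \alpha_{i+1}(x,w;x_{i+1},w_{i+1})\,|c_\rho - 1|\,p_i(y\mid x,w)$. With $\alpha_i = O(1/i)$ and bounded copula and outcome densities, this is $O(1/i)$ uniformly in $(y,w)$.
\item For the Bayesian bootstrap, $\Delta^W_i = (i+1)^{-1}(\delta_{w_{i+1}} - P_i)$, a signed measure of total variation $O(1/i)$. A smooth copula update for $W$ would likewise give $O(1/i)$ in $L^1$.
\end{itemize}
Hence $|\epsilon_i| \le C/i^2$ for a constant $C$ depending only on the uniform bounds, and $\sum_i \epsilon_i<\infty$ deterministically.

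\textbf{Step 3: conclude.} Apply the a.c.i.d.\ convergence theorem of \citet{battiston2025bayesianpredictiveinferencemartingales}. Alternatively, argue directly: $M_i(f) - \sum_{j<i}\epsilon_j$ is a bounded martingale, hence it converges almost surely, and since $\sum_j \epsilon_j$ converges, so does $M_i(f)$. Do this for $f$ in a countable convergence-determining class, for instance indicators $1\{y\le q\}$ with $q\in\mathbb{Q}$ or bounded Lipschitz functions with rational parameters. This gives almost-sure convergence of $P_i(y(x))$ on a common null-set complement. The limits define a sub-probability cdf, so it remains to show tightness, so that no mass escapes and the limit $P_\infty(y(x))$ is a genuine probability measure.

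\textbf{Main obstacle.} The delicate point is this tightness step, together with making the bound in Step 2 genuinely uniform. The bound on $|c_\rho-1|$ fails near the boundary of $[0,1]^2$, so the uniform-boundedness assumption on the copula densities has to be imposed, possibly on truncated arguments. For tightness, I would first use the c.i.d.-type identity $\mathbb{E}[P_i(y(x))(A)] = P_n(y(x))(A) + O(\sum_{j\ge n} j^{-2})$ for $A = \{|y|>K\}$. Combined with Fatou's lemma, this controls $\mathbb{E}[P_\infty(|y|>K)]$ by $P_n(|y|>K)$ plus a vanishing term, which forces $P_\infty(\mathbb{R})=1$ almost surely.
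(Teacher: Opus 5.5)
Your Steps 1--2 and the appeal to the a.c.i.d.\ convergence theorem of Battiston et al.\ are the paper's proof essentially line by line. The paper uses the same three-term expansion, in which only the integrated covariance $\mathbb{E}[\int \Delta^Y_i \Delta^W_i \,\mathrm{d}w \mid \mathcal{F}_i]$ survives. It bounds that term by $O(i^{-2})$ using the $O(1/i)$ copula weights and the $2/i$ total-variation bound on the Bayesian-bootstrap increment. Your caution about the copula is well founded. The Gaussian copula density is unbounded for every $\rho \in (0,1)$; for example $c_\rho(u,u) \to \infty$ as $u \to 0$ or $u \to 1$. The paper's proof asserts the opposite when $\rho$ is bounded away from one, so uniform boundedness really is an extra hypothesis.

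The tightness step of your self-contained route does not close as written. Your Step 2 bound is uniform in $f$. So the remainder $O(\sum_{j \ge n} j^{-2})$ in $\mathbb{E}[P_i(y(x))(|y|>K)]$ is of order $C/n$ for every $K$, with $n$ the fixed sample size. Letting $K \to \infty$ therefore only gives $\mathbb{E}[1 - P_\infty(\mathbb{R})]$ at most of order $C/n$, not zero.

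A minimal repair is to condition on $\mathcal{F}_m$ rather than start at $n$. This gives $\mathbb{E}[1 - P_\infty(\mathbb{R}) \mid \mathcal{F}_m] \le P_m(y(x))(|y|>K) + C\sum_{j \ge m} j^{-2}$ for every $K$. Let $K \to \infty$, using that $P_m(y(x))$ is a probability measure. Then let $m \to \infty$ via L\'evy's upward theorem to get $P_\infty(\mathbb{R}) = 1$ a.s.

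Simpler still, the cross term is identically zero for these default rules. Conditionally on $\mathcal{F}_i$ and $(x_{i+1}, w_{i+1})$, you have $P_i(y_{i+1} \mid x_{i+1}, w_{i+1}) \sim \mathrm{U}(0,1)$ and $\int_0^1 c_\rho(u,v)\,\mathrm{d}v = 1$. Hence $\mathbb{E}[\Delta^Y_i \mid \mathcal{F}_i, x_{i+1}, w_{i+1}] = 0$, while $\Delta^W_i$ depends only on $\mathcal{F}_i$ and $w_{i+1}$. So $P_i(y(x))(f)$ is an exact bounded martingale for every bounded $f$. Tightness then follows from $\mathbb{E}[P_i(y(x))(|y|>K)] = P_n(y(x))(|y|>K)$. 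Your countable-class argument then proves the theorem directly, without Battiston et al.\ and without any boundedness hypothesis on the copula or outcome densities.
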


This result is important as the martingale posterior is only a valid representation of Bayesian uncertainty if the predictive resampling scheme converges to a well-defined limiting quantity. Without it, the draws $p_N^{(b)}(y(x^*))$ produced by the algorithm would have no guarantee of stabilising as $N$ grows, and the resulting posterior draws could not be interpreted as samples from a coherent distribution. The key property in \citet{fong_martingale_2023} is that the sequences are c.i.d. as studied in \cite{berti2004limit}. We use the theory for almost conditionally identically distributed (a.c.i.d.) sequences established in \cite{battiston2025bayesianpredictiveinferencemartingales} to extend the guarantee to the interventional distributions targeted by our algorithm.




\subsection{Practical considerations}

\paragraph{Choosing the predictive rule.} An important question is how to choose the predictive rule used to construct the predictive sequences. The framework is permissive here: in principle any valid predictive rule may be used, and because the joint predictive factorises into separate components, each factor can be matched to its role in the analysis. The conditional outcome model $p(y \mid x, w)$ enters the interventional distribution directly through \Cref{eq:identification}, so flexibility matters most here. For continuous outcomes, we therefore recommend the conditional copula regression of \citet[][Section 4.4]{fong_martingale_2023}, which adapts to complicated conditional relationships without assuming a parametric likelihood. When the propensity $p(x \mid w)$ is modelled explicitly, a logistic-regression plug-in predictive is a convenient and interpretable choice for binary treatments. For the covariate distribution $p(w)$, we default to the Bayesian bootstrap, which requires no tuning and we adopt it even for purely discrete covariates, for simplicity. As a fully nonparametric alternative, the Bayesian bootstrap can instead be applied jointly to $(y, x, w)$. We illustrate and compare several of these choices in the examples below.

\paragraph{Sequence length.} Ideally, the imputed population size $N$ should be chosen as large as computationally feasible, though a large $N$ may not be necessary if the density updates become negligible well before the final step. As proposed by \cite{fong_martingale_2023} and \cite{ng_tabmgp_2026}, we recommend monitoring the mean $L_1$ distance between the initial fit and the intermediate density $$\bar{L}_1(i) = \frac{1}{B} \sum_{b=1}^B \left\lVert p_n(y(x)) - p_i^{(b)}(y(x))\right\rVert_1$$ as a function of the forward steps $i=n+1, \ldots, N$. 
A suitable $N$ is one for which the mean $L_1$ distance has stabilised, indicating convergence of the predictive distributions. This diagnostic requires the causal marginalisation to be performed at each forward step rather than only for the final density, which adds computational overhead. We suggest running a short pilot with a small number of predictive sequences to calibrate $N$ before the full analysis.

\paragraph{S-Learner versus T-Learner?} By default, we treat the treatment assignment $x$ as an additional covariate within a single model when updating the conditional model $p(y \mid x, w)$, which we refer to as an S-Learner\footnote{We use this terminology borrowing from the literature on heterogeneous treatment effects, where an S-Learner jointly learns the response surface for control and treatment group, whereas a T-Learner learns two separate response surfaces. See for example \cite{Kunzel_etal_19}, \cite{hahn_bayesian_2020} or \cite{caron_estimating_2022}.} approach. This strategy is advantageous as it allows the model to share information across different treatment levels and naturally accommodates continuous treatment variables. Under the copula update, a new observation with covariates $(x_{i+1}, w_{i+1})$ updates the predictive density $p_i(y \mid x, w)$ at every $(x, w)$, but with a weight that decays as $(x, w)$ moves away from $(x_{i+1}, w_{i+1})$. Consequently, a control observation still informs the treated predictive whenever its covariate vector is similar, with the degree of information sharing controlled by the corresponding bandwidth. In contrast, one could also take a T-Learner approach that partitions the data by treatment level and fits entirely separate models. While the T-Learner offers greater functional flexibility by not imposing a joint structure across groups, it precludes information sharing between treatment levels and becomes significantly more computationally intensive as the number of treatment categories increases. We will contrast these two approaches in examples below. In more complicated settings with many treatment levels, we suspect that the S-Learner approach is more suitable. 

\paragraph{Distribution functions. } Our exposition focuses on probability density functions (pdfs), but our method can easily target cumulative distribution functions (cdfs) as well. In fact, when using the copula update for a continuous outcome as defined by \Cref{eq:copula_update}, we obtain cdfs as a byproduct of the algorithm. The causal identification via marginalisation over the covariate distribution applies analogously to cdfs. Quantiles can be obtained by numerically inverting the resulting predictive cdfs. Thus, our framework is more flexible than most existing distributional causal inference methods, which are typically restricted to targeting only one of pdfs, cdfs, or quantiles.

\paragraph{Computational details.} Our method inherits the computational benefits of the martingale posterior framework, most notably that the predictive recursions are entirely parallel. The only additional cost is the causal marginalisation, which is an $\mathcal{O}(N)$ operation. For the copula recursion, this is preceded by $\mathcal{O}(n^2)$ and $\mathcal{O}(n)$ operations for fitting the copula density and $\mathcal{O}(N)$ operations for the predictive resampling. The $\mathcal{O}(N)$ operations are fully parallelisable across predictive sequences. In our empirical examples, the algorithm runs in no more than a few minutes, making it very fast compared to other Bayesian methods. Our JAX implementation, which is based on the original code from \cite{fong_martingale_2023}, can benefit from substantial speedups when run on a GPU. The code to reproduce our results is available at \url{https://github.com/gregorsteiner/CausalMP}.

\subsection{Conditional counterfactual distributions}

The approach outlined above can easily be extended to target conditional interventional distributions, opening up more specific comparisons. For example, a common estimand of interest is the average treatment effect on the treated (ATT), which compares mean counterfactual outcomes specifically for individuals who received the treatment. Under the same assumptions as above, the control outcome distribution conditional on receiving the treatment can be identified as
\begin{align*}
    p(y(0) \mid X = 1) = \int p(y(0) \mid X = 1, w) p(w \mid X = 1) \mathrm{d}w = \int p(y \mid X = 0, w) p(w \mid X = 1) \mathrm{d}w.
\end{align*}
In fact, assuming unconfoundedness only for the control outcome, i.e. $Y(0) \ind X \mid W$, is sufficient here. This modified adjustment can be easily integrated into the martingale posterior framework by integrating over the covariate distribution in the treatment group. More precisely, we approximate the integral as
\begin{align} \label{eq:identification_att}
    p_N(y(0) \mid X = 1) \approx \frac{1}{\sum_{i=1}^N x_i} \sum_{i: x_i = 1} p_N(y \mid X = 0, W = w_i),
\end{align}
where $\{x_i, w_i\}_{i=1}^N$ are resampled values of the treatment assignment and covariates. These ideas can be applied analogously to obtain counterfactual distributions in the  control group.


We can also condition directly on specific covariates to obtain subgroup comparisons. Under conditional ignorability, the conditional counterfactual distribution is $p(y(x) \mid w) = p(y \mid x, w)$, which we obtain as a byproduct of our default algorithm. Instead of integrating over the full empirical distribution of $W$ in the post-processing step, one can simply fix the covariates at chosen values of interest to yield the desired conditional counterfactual distribution. Finally, we can condition on both treatment assignment and specific covariates at the same time. For instance, this can yield counterfactual distributions among the treated for specific subsamples, similar in spirit to the conditional average treatment effect among the treated (CATT) estimand. 

\subsection{Illustration with simulated data}

We adopt a setting similar to Simulation~1 in \citet{xu2022bayesiansemiparametricmethodestimating} with five continuous covariates $W_j \sim \mathrm{U}(-2,2)$, $j = 1,\dots,5$, of which the first $J = 2$ are confounders, \begin{equation*} X \mid W \sim \mathrm{Bernoulli}\!\left( \operatorname{expit}\!\Big( \tfrac{4}{J}\textstyle\sum_{j=1}^{J} W_j \Big) \right), \end{equation*} where $\operatorname{expit}(x)=\frac{\exp{(x)}}{1+\exp{(x)}}$. The remaining three covariates do not affect treatment but contribute to the variability of the outcome. The control potential outcome, shared across both scenarios, is generated as \begin{equation*} Y(0) \mid W = -2.3 + U_1 + U_1^2 + \varepsilon_0, \quad \varepsilon_0 \sim 0.75\,\mathrm{N}_{+}(0, 0.9^2) + 0.25\,\mathrm{N}_{-}(0, 0.3^2), \end{equation*} where $U_k = \operatorname{expit}\!\big( 0.8 \sum_{j=1}^{5} W_j + 0.1 \sum_{j=1}^{5} |W_j|^k \big)$ and $\mathrm{N}_{+}$, $\mathrm{N}_{-}$ denote half-normal distributions truncated to the positive and negative axes, respectively, so that $Y(0)$ is right-skewed. We consider two scenarios that differ only in the treated potential outcome. In Scenario~1, \begin{equation*} Y(1) \mid W = -1.3 + U_2 + U_2^2 + \varepsilon_1, \quad \varepsilon_1 \sim 0.75\,\mathrm{N}_{-}(0, 0.9^2) + 0.25\,\mathrm{N}_{+}(0, 0.3^2), \end{equation*} which mirrors the control distribution and is left-skewed, so that the two potential-outcome densities are broadly similar in shape, but have different skew. In Scenario~2, \begin{equation*} Y(1) \mid W \sim 0.7\,\mathrm{N}\!\left(-2.5 + 5 U_2,\, 0.35^2\right) + 0.3\,\mathrm{N}\!\left(2.5 - 5 U_2,\, 0.35^2\right), \end{equation*} which is bimodal and presents a more challenging estimation target.

We compare three variants of our method. All three model the outcome through the copula-based conditional density regression and target the marginal counterfactual densities by integrating over the covariate distribution. The first, an \emph{S-Learner with a Bayesian bootstrap treatment model}, fits a single conditional density on the full sample and propagates the joint distribution of $(X, W)$ via the Bayesian bootstrap in the forward sampling. The second, an \emph{S-Learner with a logistic treatment model}, retains the single outcome model but, rather than bootstrapping the treatment, draws $X$ from a logistic propensity model whose coefficients are updated recursively along the predictive sequence via a natural-gradient step, while $W$ is resampled by the Bayesian bootstrap. The third, a \emph{T-Learner with the Bayesian bootstrap}, fits a separate conditional density on each treatment arm and integrates over the covariate distribution drawn by the Bayesian bootstrap. For each variant, we generate $B = 100$ posterior samples, each propagated with $2{,}000$ forward samples, 
from which we report the posterior mean density and pointwise $95\%$ credible bands. To assess the stability of the forward sampling, we also compute for each posterior sample the $L_1$ distance between the initial fitted density and the intermediate density at every forward step, yielding a trajectory that shows how quickly the density stabilises. The ground truth is obtained by Monte Carlo integration of the conditional densities over the covariate distribution. \Cref{fig:sim_density} illustrates the estimated marginal counterfactual densities and the $L_1$ stability trajectories on a single simulated dataset for both scenarios.

In all settings, the $L_1$ distances stabilise well, suggesting that the predictive distributions converge to a well-defined limit. In Scenario 1, the S-Learners approximate the true counterfactual distributions well, whereas the T-Learner is slightly overconfident for the control outcome. In Scenario 2, where the treatment counterfactual distribution is bimodal, the T-Learner yields narrower credible intervals, especially for the bimodal distribution. The choice between logistic treatment resampling and the Bayesian bootstrap has negligible impact on the resulting posteriors in both scenarios.

\begin{figure}[h]
    \centering
    \includegraphics[width=\linewidth]{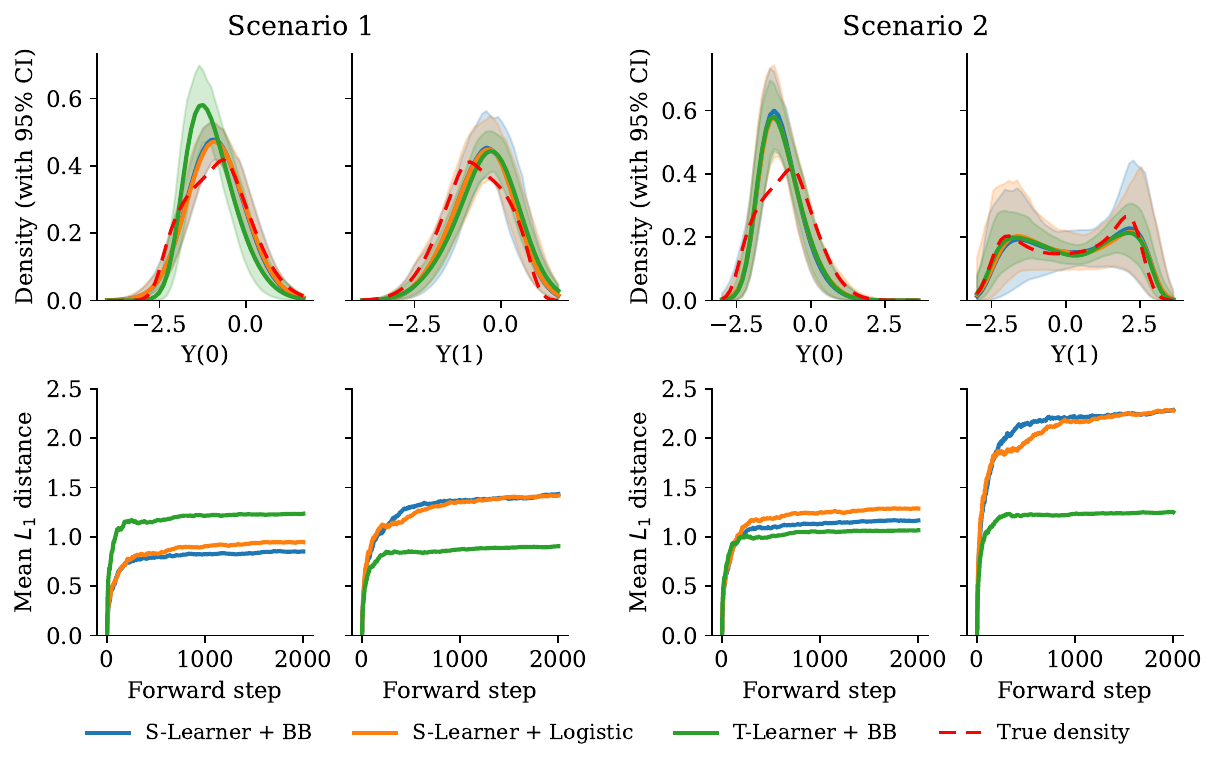}
    \caption{{\bf Simulated data.} Marginal counterfactual density estimates and $L_1$ stability diagnostics for Scenarios~1 and~2. Top: posterior mean densities with $95\%$ credible bands against the true density. Bottom: mean $L_1$ distance from the initial fitted density across forward sampling steps.}
    \label{fig:sim_density}
\end{figure}

\section{Extension to instrumental variables}

Our proposed methodology can be extended to instrumental variable designs with unobserved confounding. In the main text, we focus on the setting with a binary treatment and instrument. We sketch an extension to more general designs with continuous treatments and arbitrary instruments in \Cref{sec:iv_extension}, but leave a more detailed treatment for future work.

\subsection{A binary experiment with non-compliance} \label{sec:binary_noncompliance}

We observe data from a binary experiment $\mathcal{D}_{1:n} = \left\{ (y_i, x_i, z_i) \right\}_{i=1}^n$ where $z_i \in \{0, 1\}$ is the assigned treatment and $x_i \in \{0, 1\}$ is the treatment received. We consider latent potential outcomes $Y(X, Z)$ and potential treatments $X(Z)$. The variable $Z$ indicates assignment to the treatment and is often called an instrumental variable (IV). Key assumptions are  that the instrument is randomly assigned and does not directly affect the outcome.

\begin{assumption}[Random Assignment] \label{ass:iv_random}
    The instrument $Z$ is randomly assigned, $Z \ind (Y(x, z), X(z))$ for all $x, z = 0, 1$.
\end{assumption}

\begin{assumption}[Exclusion Restriction]
    The instrument affects the outcome only through its effect on the treatment, that is, $Y(X, Z)$ is constant in $Z$ such that we can write $Y(X) = Y(X, Z)$.
\end{assumption}

Individuals can be classified into four response types based on their latent potential treatments \citep[see e.g.][]{angrist_identification_1996}: Always-takers ($X(1) = X(0) = 1$), Never-takers ($X(1) = X(0) = 0$), Compliers ($X(1) = 1, X(0) = 0$), and Defiers ($X(1) = 0, X(0) = 1$). The response-type probabilities, denoted by $(p_{AT}, p_{NT}, p_{CP}, p_{DF})$, can be expressed in terms of the observed data as
\begin{align*}
    p(X = 0 \mid Z = 0) &= p_{NT} + p_{CP} \\
    p(X = 1 \mid Z = 0) &= p_{AT} + p_{DF} \\
    p(X = 0 \mid Z = 1) &= p_{NT} + p_{DF} \\
    p(X = 1 \mid Z = 1) &= p_{AT} + p_{CP}.
\end{align*}
Finally, we assume monotonicity, which rules out the existence of defiers ($p_{DF} = 0$).
\begin{assumption}[Monotonicity] \label{ass:iv_monotonicity}
    The treatment assignment is monotone in the sense that $X(1) \geq X(0)$ with probability $1$ and $X(1) > X(0)$ with positive probability.
\end{assumption}
Under the monotonicity assumption, the remaining three response-type probabilities can be identified from the observed data by solving the system of equations above. The strict version adopted here guarantees that the proportion of compliers is non-zero. Then, the interventional distributions for the complier population, denoted by $p(y(x) \mid CP)$, can be identified as \citep{imbens1997estimating}
\begin{align} \label{eq:late_identification}
\begin{aligned}
    p(y(0) \mid CP) &= \frac{p_{NT} + p_{CP}}{p_{CP}} p(y \mid X = 0, Z = 0) - \frac{p_{NT}}{p_{CP}} p(y \mid X = 0, Z = 1) \\
    p(y(1) \mid CP) &= \frac{p_{AT} + p_{CP}}{p_{CP}} p(y \mid X = 1, Z = 1) - \frac{p_{AT}}{p_{CP}} p(y \mid X = 1, Z = 0).
\end{aligned}
\end{align}
We propose to resample $(X, Z)$ pairs using the Bayesian bootstrap and to learn the distribution $p(y \mid x, z)$ via the conditional copula update. For each predictive sequence, the response-type probabilities can be estimated from the resampled $(X, Z)$ pairs and the interventional distributions can be computed via \Cref{eq:late_identification}.

The validity of this construction again hinges on the sequence of complier interventional densities converging to a well-defined limit. Unlike the setting of \Cref{thm:convergence}, the complier density in \Cref{eq:late_identification} is a ratio of reduced-form quantities, whose numerator and denominator both evolve along the predictive sequence. The following result, proved in \Cref{app:proof_iv_convergence}, shows that convergence nonetheless holds provided the complier share is bounded away from zero.

\begin{theorem}[Convergence of the complier interventional density] \label{thm:iv_convergence}
    Suppose the predictive update for $p_i(y \mid x, z)$ satisfies the martingale property with a uniformly bounded outcome density, and that $X$ and $Z$ are updated by a Bayesian bootstrap. If Assumption \ref{ass:iv_monotonicity} holds, then for each $x \in \{0, 1\}$ there exists a random probability measure $P_\infty(y(x) \mid CP)$ such that $P_i(y(x) \mid CP)$ converges weakly to $P_\infty(y(x) \mid CP)$ almost surely as $i \to \infty$.
\end{theorem}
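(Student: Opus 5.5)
The plan is to reduce the complier density to a finite combination of quantities that each converge almost surely, and then conclude by continuity. By \Cref{eq:late_identification}, for $x=1$,
\begin{align*}
P_i(y(1)\mid CP) = \frac{\pi^{(i)}_{11}}{\pi^{(i)}_{11}-\pi^{(i)}_{10}}\, P_i(y\mid X=1,Z=1) - \frac{\pi^{(i)}_{10}}{\pi^{(i)}_{11}-\pi^{(i)}_{10}}\, P_i(y\mid X=1,Z=0),
\end{align*}
where $\pi^{(i)}_{xz}=p_i(X=x\mid Z=z)$. These quantities give the response-type probabilities through $p_{AT}^{(i)}=\pi^{(i)}_{10}$ and $p_{CP}^{(i)}=\pi^{(i)}_{11}-\pi^{(i)}_{10}$. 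The case $x=0$ is symmetric. The proof therefore has three ingredients: almost sure convergence of the $(X,Z)$ proportions, almost sure convergence of the reduced-form conditionals $P_i(y\mid x,z)$, and control of the denominator.

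\textbf{Step 1 (Bayesian bootstrap part).} The resampled $(X,Z)$ pairs form a P\'olya urn over the four cells $\{0,1\}^2$. By the urn recursion in \Cref{sec:predictive_rules}, the cell proportions are bounded martingales. They therefore converge almost surely to a Dirichlet-distributed limit $\pi^{(\infty)}$ whose parameters are the observed cell counts. Each cell observed in $\mathcal{D}_{1:n}$ has an almost surely positive limiting weight, so the conditional probabilities $\pi^{(i)}_{xz}$ converge almost surely. The cell counts grow linearly in $i$, which also makes the copula updates within each cell infinitely often active.

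\textbf{Step 2 (outcome part).} For each fixed $(x,z)$ and $y$, $p_i(y\mid x,z)$ is a nonnegative martingale by assumption. Its increments are bounded using the uniform bound on the outcome density and the copula densities, so it converges almost surely, and so does $P_i(y\mid x,z)$. Convergence on a countable dense set of $y$, together with the uniform bound (equicontinuity of the cdfs) and tightness from the c.i.d.\ limit theory of \citet{berti2004limit}, upgrades this to weak convergence of $P_i(\cdot\mid x,z)$ to a random probability measure, almost surely. This is the same argument used for \Cref{thm:convergence}, and I would invoke that proof directly. The caveat is that the martingale property holds only conditionally on the covariate sequence, which here is the urn draws. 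One can either condition on the whole $(X,Z)$ sequence, which is exogenous to the outcome draws, or use the a.c.i.d.\ framework of \citet{battiston2025bayesianpredictiveinferencemartingales} with summable errors.

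\textbf{Step 3 (ratio and identification).} The main obstacle is the denominator $p_{CP}^{(i)}$. It is the difference $\pi^{(i)}_{11}-\pi^{(i)}_{10}$, and in principle it could approach zero or even become negative along a sequence. Assumption~\ref{ass:iv_monotonicity} is a population statement, so it has to be connected to the urn limit. My approach is to use the fact that $p^{(\infty)}_{CP}=\pi^{(\infty)}_{11}-\pi^{(\infty)}_{10}$ is a continuous function of a Dirichlet vector with positive parameters. It is therefore almost surely nonzero, and the event $\{p^{(\infty)}_{CP}>0\}$ is the one on which the estimand is defined. I would state the result on this event, or note that monotonicity imposed as a support restriction makes the event almost sure. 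On that event $p^{(i)}_{CP}\ge p^{(\infty)}_{CP}/2>0$ for all large $i$. The map $(\pi,P_{11},P_{10})\mapsto$ complier cdf is then continuous, and the continuous mapping theorem gives pointwise almost sure convergence of $P_i(y(1)\mid CP)$ at every continuity point. The limit $P_\infty(y(1)\mid CP)$ is a signed combination of two probability measures with total mass $1$. To show it is a genuine probability measure, I would argue that it is nondecreasing: its density is the limit of the nonnegative densities implied by monotonicity, or one can appeal to its identification as the complier distribution of the limiting law. Weak convergence then follows from pointwise cdf convergence together with tightness inherited from the two tight components.
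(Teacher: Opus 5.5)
Your route differs from the paper's. The paper writes the complier density as $a_ib_i/c_i$ plus a martingale, with $a_i=p_{NT,i}$, $b_i$ a difference of reduced-form outcome densities and $c_i=p_{CP,i}$. It then proves in \Cref{lem:acid-ratio} that such a ratio is an almost supermartingale, with errors controlled by the conditional second moments of the increments (summable because of the $1/i$ increment bounds), and invokes \citet[Theorem 1]{battiston2025bayesianpredictiveinferencemartingales}. You instead take almost sure limits of the ingredients and apply continuous mapping. For the convergence itself your argument is sound and needs less. It uses no increment rates, and it needs only $p^{(\infty)}_{CP}\neq0$ rather than the uniform bound $p_{CP,i}\ge\varepsilon$ for all $i$. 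The paper reads that bound off from monotonicity, but the urn does not deliver it in general: under one-sided non-compliance, for instance, $p_{CP,i}=p_i(X=1\mid Z=1)$ has a Beta limit, which charges every interval $(0,\varepsilon)$. The event $\{p^{(\infty)}_{CP}\neq0\}$ is almost sure because of the Beta structure of the urn limit, not because of continuity alone, and continuous mapping needs nothing more. So restricting to $\{p^{(\infty)}_{CP}>0\}$ is unnecessary for convergence. Your Step 2 caveat is also moot: $\mathbb{E}[c_\rho(u,V)]=1$ for $V\sim\mathrm{U}(0,1)$ whatever $(x_i,z_i)$ is drawn, so $p_i(y\mid x,z)$ is a martingale for the joint filtration.

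The genuine gap is your final step, that the limit is a probability measure, and it cannot be closed under the stated hypotheses. Nonnegativity of $\bigl(\pi_{11}p(y\mid1,1)-\pi_{10}p(y\mid1,0)\bigr)/p_{CP}$ is exactly the IV testable implication $\pi_{11}p(y\mid1,1)\ge\pi_{10}p(y\mid1,0)$. Monotonicity imposes this on the population, but neither the Bayesian bootstrap nor the copula update imposes it on $p_i$. So there are no ``nonnegative densities implied by monotonicity'' along the sequence. Nor need the limiting law of $(Y,X,Z)$ come from any monotone IV model, which makes the appeal to identification under it circular.

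Here is a concrete failure. Take the vitamin A design with all of $(Y,X,Z)$ bootstrapped, which meets the theorem's hypotheses. Write $s_{0z}$ for the limiting survival probability in cell $(X=0,Z=z)$. The limiting complier survival probability under control is then $(s_{00}-p_{NT}s_{01})/(1-p_{NT})$, where by Dirichlet neutrality $s_{00}$, $s_{01}$ and $p_{NT}$ are independent Beta variables with full support on $(0,1)$. This exceeds one, so the limit puts negative mass on death, on the positive-probability event $1-s_{00}<p_{NT}(1-s_{01})$. It happens even though $p^{(\infty)}_{CP}=1-p_{NT}>0$ almost surely.

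What your argument proves is almost sure weak convergence to a random signed measure of total mass one. A probability-measure limit needs an extra hypothesis, for example that the IV inequalities hold for the predictive reduced forms. The paper's proof is silent on this as well: it obtains a probability-measure limit only by applying the a.c.i.d.\ theorem to the complier sequence as if it were a genuine predictive sequence.
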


\subsection{Extension to covariates}

The binary instrumental variable approach extends naturally to settings where covariates $W$ are available, which is useful when the instrument is only conditionally randomly assigned, or when conditional complier effects are of interest. Under the conditional versions of Assumptions~\ref{ass:iv_random}--\ref{ass:iv_monotonicity}, the conditional complier interventional distributions $p(y(x) \mid CP, w)$ are identified by the covariate-conditional analogue of Equation~\eqref{eq:late_identification}, replacing each observed conditional outcome distribution and response-type probability by its conditional counterpart
\begin{align} \label{eq:conditional_late_identification}
\begin{aligned}
    p(y(0) \mid CP, w) &= \frac{p(NT \mid w) + p(CP \mid w)}{p(CP \mid w)} p(y \mid X = 0, Z = 0, W = w) \\
    &- \frac{p(NT \mid w)}{p(CP \mid w)} p(y \mid X = 0, Z = 1, W = w) \\
    p(y(1) \mid CP, w) &= \frac{p(AT \mid w) + p(CP \mid w)}{p(CP \mid w)} p(y \mid X = 1, Z = 1, W = w) \\
    &- \frac{p(AT \mid w)}{p(CP \mid w)} p(y \mid X = 1, Z = 0, W = w).
\end{aligned}
\end{align}
The marginal complier distribution is then recovered by integrating over the covariate distribution among compliers,
\begin{equation}
  p(y(x) \mid CP) = \int p(y(x) \mid CP, w)\, p(w \mid CP)\, \mathrm{d}w,
  \label{eq:complier-marginal}
\end{equation}
where the covariate distribution among compliers follows from Bayes' theorem as
\begin{equation*}
    p(w \mid CP) = \frac{p(CP \mid w)\, p(w)}{p_{CP}},
\end{equation*}
with $p(CP \mid w) = p(X = 1 \mid Z = 1, W = w) - p(X = 1 \mid Z = 0, W = w)$ and $p_{CP} = \int p(CP\mid w)\,p(w)\,\mathrm{d}w$ the unconditional complier share.

In \Cref{eq:conditional_late_identification}, each term carries a factor $1/p(CP\mid w)$, which is cancelled by the weight $p(CP\mid w)$ implicit in \eqref{eq:complier-marginal}. For the control outcome (and analogously for treatment), this leaves
\begin{equation*}
  p(y(0)\mid CP)
  = \frac{\int \left[p(y, X = 0 \mid Z = 0, w) - p(y, X = 0 \mid Z = 1, w) \right] p(w)\, \mathrm{d}w}{\int p(CP\mid w)\, p(w)\, \mathrm{d}w},
\end{equation*}
a ratio of reduced-form quantities. This integrates well into the martingale posterior framework: we may resample $(Y, X, Z, W)$, for instance, by modelling $p(y \mid x, z, w)$ with the conditional copula update, $p(x \mid z, w)$ and $p(z \mid w)$ with a parametric update, and $p(w)$ with the Bayesian bootstrap. Within each predictive sequence, the conditional complier probability is read off from the predictive treatment model as 
$$p_N(CP \mid w_i) = p_N(X=1 \mid Z=1, W=w_i) - p_N(X=1 \mid Z=0, W=w_i),$$
and the marginal complier distribution is approximated by the weighted average
$$p_N(y(x) \mid CP) \approx \frac{\sum_{i=1}^{N} p_N(CP \mid w_i)\, p_N(y(x) \mid CP, w_i)}{ \sum_{i=1}^{N} p_N(CP \mid w_i)},
$$
where $\{w_i\}_{i=1}^N$ are resampled from the marginal covariate distribution. Fixing $w$ rather than averaging yields conditional complier effects.

Our construction requires strict monotonicity to hold within each covariate stratum, that is, \Cref{ass:iv_monotonicity} holds conditional on every value of $w$. This guarantees that the conditional complier share $p(CP \mid w)$ is bounded away from zero, and under this condition the convergence guarantee of \Cref{thm:iv_convergence} carries over to the setting with covariates. 

The conditional monotonicity that our construction requires can be a strong assumption in practice. It is worth noting, however, that traditional linear IV estimators such as two-stage least squares (TSLS) rely on it too, and can lose their causal interpretation when the direction of monotonicity varies with the covariates \citep{sloczynski2026}. The estimand then becomes a weighted average of covariate-specific complier average effects in which some of the weights may be negative. A similar problem arises when the covariates are not controlled for in a sufficiently ``rich'' fashion \citep{blandhol2025}, in which case the estimand may depend on the levels of the potential outcomes rather than on treatment effects alone, and can no longer be expressed as a non-negatively weighted average of subgroup effects. Our approach avoids this latter problem by construction: the identification in \eqref{eq:conditional_late_identification} operates within each covariate stratum directly, without forming a linear projection of $Z$ onto $W$ that could be misspecified.


\section{Examples with real data}

\subsection{The effect of zinc supplementation on the duration of the common cold}

We illustrate our methodology using data from several randomised, double-blind, placebo-controlled trials investigating the effect of zinc acetate lozenges on common cold duration, as previously analysed by \cite{hemila_estimating_2025}. Looking at average treatment effects can be misleading because the absolute benefit of treatment is expected to vary depending on the severity of the illness.

To conduct our inference, we generate martingale posterior samples for the potential outcome distributions using the approach suggested in \Cref{sec:mp_counterfactual}. As the data are from clinical trials, the treatment assignment can be assumed to be unconfounded even without conditioning on covariates. We compare the S-Learner approach that jointly updates treatment and control data to the T-Learner approach that learns them separately. To formally quantify how the treatment effect varies across the outcome distribution, we compute quantile treatment effects: for each posterior draw, we invert the modelled cumulative distribution functions of the treated and control potential outcomes to obtain the corresponding quantiles, and take their difference. This yields, for any quantile level $\delta$, a posterior distribution of the $\delta$-quantile treatment effect $Q_1(\delta) - Q_0(\delta)$.

\Cref{fig:zn_example} displays the estimated counterfactual densities and quantile treatment effects. The treated distribution is centred on lower durations, but is also more concentrated. In particular, the control distribution has heavier tails, suggesting that individuals who would otherwise suffer from the longest-lasting colds derive the greatest absolute benefit from zinc supplementation. The S- and T-Learner estimates are broadly consistent, with the main difference being slightly wider credible intervals for the control distribution under the T-Learner. This pattern is confirmed by the posterior quantile treatment effects shown in the bottom row: the posterior mean effect of zinc supplementation grows from a reduction of $1.4$ days at the $0.1$-quantile to around $3.8$ days at the $0.9$-quantile, for both learners, indicating that patients with more severe colds benefit most from treatment.

\begin{figure}
    \centering
    \includegraphics[width=0.9\linewidth]{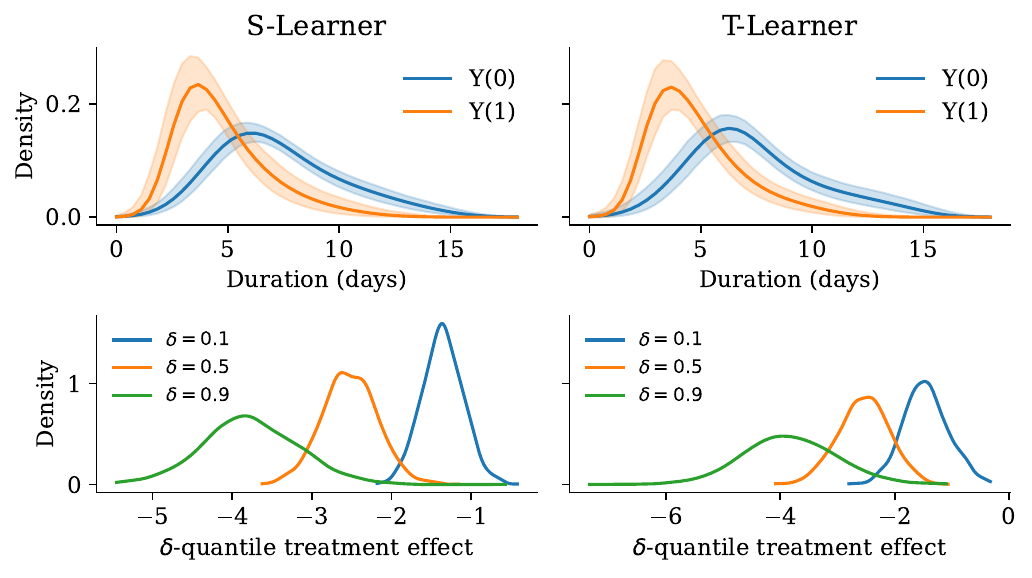}
    \caption{{\bf Zinc supplementation data.} Top row: estimated counterfactual densities (with $95\%$ CIs) for the duration of the common cold under zinc supplementation and control, for the S-Learner (left) and T-Learner (right). Bottom row: posterior distributions of the $\delta$-quantile treatment effect $Q_1(\delta) - Q_0(\delta)$ for $\delta \in \{0.1, 0.5, 0.9\}$. Results are based on $1{,}000$ predictive sequences of $5{,}000$ forward samples each.}
    \label{fig:zn_example}
\end{figure}

\subsection{A vitamin A supplementation trial with one-sided non-compliance}

We illustrate the martingale posterior approach on a dataset investigating the impact of vitamin A supplementation on children's survival rates \citep{sommer_estimating_1991}. Villages in Indonesia were randomly assigned to receive vitamin supplements. The experiment exhibits one-sided non-compliance: while no individuals in the control villages had access to the supplements, a subset of those in the treatment villages failed to receive them. Thus, the monotonicity assumption holds by design and all individuals are either compliers or never-takers. Table~\ref{tab:sommer_zeger_data} shows a contingency table for the dataset.

\begin{table}[h]
\centering
\begin{tabular}{lcccc}
\hline
\textbf{Response Type} & \textbf{Assigned ($Z$)} & \textbf{Received ($X$)} & \textbf{Survived ($Y$)} & \textbf{Count} \\ \hline
Complier or Never-taker    & 0                      & 0                      & 0              & 74                   \\
Complier or Never-taker    & 0                      & 0                      & 1           & 11,514               \\
Never-taker                & 1                      & 0                      & 0              & 34                   \\
Never-taker                & 1                      & 0                      & 1          & 2,385                \\
Complier                   & 1                      & 1                      & 0               & 12                   \\
Complier                   & 1                      & 1                      & 1           & 9,663                \\ \hline
\end{tabular}
\caption{Sommer-Zeger Vitamin A Dataset \citep[Table taken from][]{imbens_bayesian_1997}}
\label{tab:sommer_zeger_data}
\end{table}

\begin{figure}[h]
    \centering
    \includegraphics[width=0.9\linewidth]{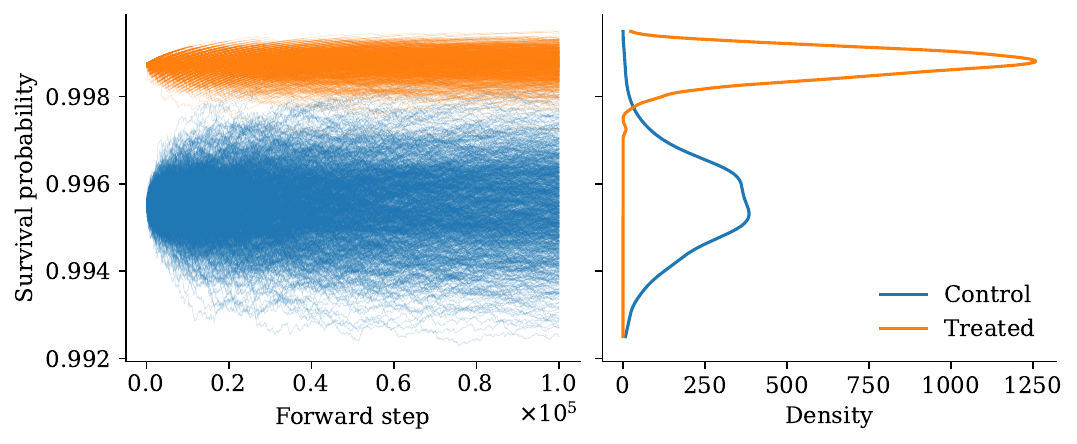}
    \caption{{\bf Vitamin A data.}  Martingale posterior predictive sequences of survival probabilities for the control and treatment (complier) counterfactual distributions (left), with the corresponding posterior densities of the final estimates (right). Results are based on $1{,}000$ predictive sequences of $100{,}000$ forward samples generated with the Bayesian bootstrap.
    }
    \label{fig:results_sommer_zeger}
\end{figure}

We apply our procedure based on the Bayesian bootstrap to obtain complier counterfactual distributions, which are fully characterised by survival probabilities in the binary outcome setting. We display martingale posteriors of the complier survival probabilities in Figure~\ref{fig:results_sommer_zeger}. A martingale posterior of the average treatment effect among compliers is implied by taking the difference of the survival probabilities in each posterior sample, and gives a $90\%$ credible interval of $[1.47, 4.94]$ (in increased survival per $1{,}000$ individuals). For the proportion of individuals who comply with the treatment assignment, we obtain a $90\%$ credible interval of $[0.795, 0.806]$. Our results are consistent with, but slightly more concentrated than those in \cite{imbens_bayesian_1997} who obtain a $90\%$ credible interval of $[1.2, 5.1]$ for the complier treatment effect. In addition, our approach is conceptually simpler and computationally fast.

\subsection{LaLonde: The effect of job training}

We reanalyse the experimental job training data from \cite{lalonde_evaluating_1986}, specifically the subset reconstructed by \cite{dehejia1999causal} containing $185$ treated and $260$ control observations. Because the program targeted individuals with particularly poor employment prospects, the resulting treatment effects cannot be easily generalised to the broader population. Thus, the estimand of choice for this dataset is typically the average treatment effect among the treated (ATT), where a weaker version of the overlap assumption is sufficient\footnote{We only need $p(X = 0 \mid w) > 0$ for all values of $w$ with $p(w \mid X = 1) > 0$. In words, for any covariate stratum in the treatment group, there needs to be positive probability of such an individual appearing in the control group. This is necessary for being able to learn about $p_N(y|X=0, W=w_i)$ in (\ref{eq:identification_att}). However, the converse does not need to hold, so the control covariate distribution can be more diverse.}. A comprehensive discussion of the application and the methods used in \cite{lalonde_evaluating_1986} as well as its impact on the current state of the field is provided in \cite{ImbensXu25}. 

We learn the counterfactual control distribution for the treated $Y(0) \mid X = 1$ as described in \Cref{eq:identification_att} and compare this with the distribution of $Y(1) \mid X = 1$, as well as the implied ATT. The latter density is trivially identified and can be obtained by analogously evaluating the conditional density at $X=1$. We compare two versions of the method, which differ in how the treatment is resampled. In the first, the treatment and covariates are drawn jointly from the Bayesian bootstrap. In the second, only the covariates are drawn from the Bayesian bootstrap, while the treatment is resampled from a logistic regression whose parameters are updated recursively. The second version quantifies uncertainty in the propensity score automatically, through the martingale posterior on the logistic-regression parameters. The first does not model the propensity explicitly, so obtaining the same uncertainty quantification requires fitting a separate logistic regression model to the imputed observations. The outcome variable, real earnings in 1978, has a large point mass at zero corresponding to participants who remained unemployed, which is inconsistent with the continuous copula-regression update. We therefore model $Y$ as a zero-inflated mixture: the point mass $P(Y = 0 \mid x, w)$ is estimated by a logistic regression whose coefficients are updated recursively, giving a martingale posterior over the atom probability, while the continuous part $p(y \mid Y \neq 0, x, w)$ is estimated with the copula regression update as before, fitted only on the non-zero observations. For the ATT inference, the two components are combined by taking, for each posterior draw, $\mathbb{E}[Y(x)] = (1 - \pi_0(x)) \, \mathbb{E}[Y(x) \mid Y(x) \neq 0]$, where $\pi_0(x)$ denotes the mixture weight for $Y(x) = 0$ at treatment level $x = 0, 1$.

\begin{figure}[h]
    \centering
    \includegraphics[width=\linewidth]{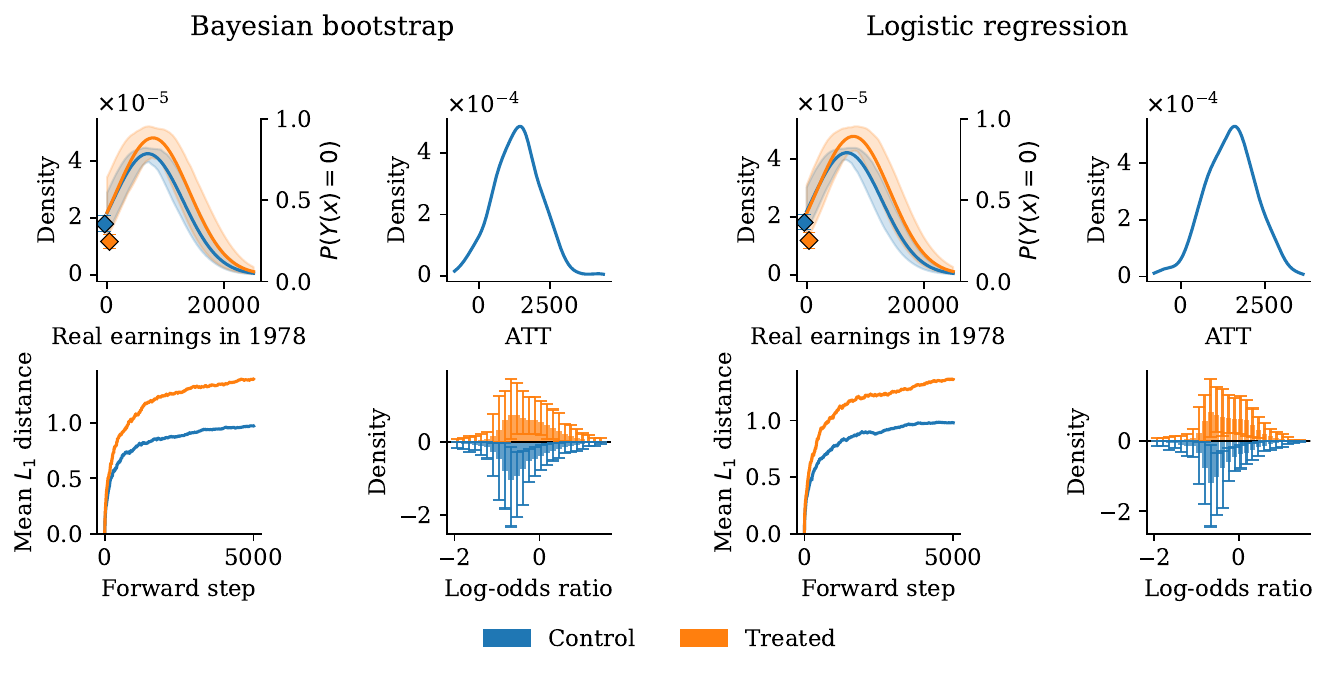}
    \caption{{\bf LaLonde job training data.} Bayesian bootstrap (left) versus Logistic treatment resampling (right). Top row: posterior mean counterfactual densities for the treated with 95\% credible intervals, alongside the posterior point mass $p(Y(x)=0)$ on the secondary axis (left), and posterior distribution of the ATT (right). Bottom row: mean $L_1$ convergence diagnostic for the continuous part (left) and posterior propensity score distributions on the log-odds scale (right). The results are based on $200$ predictive sequences of $5{,}000$ forward samples each.}
    \label{fig:lalonde_results}
\end{figure}

%
\Cref{fig:lalonde_results} displays the estimated counterfactual distributions and the implied martingale posterior on the ATT, along with the $L_1$ convergence diagnostic and log-odds ratios of the estimated propensity scores for the two predictive schemes. For both variants, the treated counterfactual distribution places slightly more mass on higher earnings, though the credible intervals overlap for the most part. The point mass at zero, however, is notably lower for the treated, suggesting that the job training programme helps participants find employment. For the implied ATT, we obtain $95\%$ credible intervals of $[-200, 2818]$ for the Bayesian bootstrap and $[135, 2866]$ for the logistic update, broadly in line with, though somewhat wider than, the experimental estimates in \cite{ImbensXu25}. As expected for an experimental sample, the propensity scores overlap closely between treatment groups, implying the required overlap in covariate distributions \citep{ImbensXu25}. The mean $L_1$ distances for the continuous part of the outcome distribution indicate that both predictive updates stabilise well.

\section{Conclusion}

We propose a martingale-posterior approach to inference on causal counterfactual distributions. Rather than targeting a single summary such as the average treatment effect, our method estimates entire counterfactual outcome distributions, from which any derived functional can be recovered, each accompanied by coherent epistemic uncertainty quantification obtained directly from the martingale posterior samples. The construction relies on predictive resampling with flexible predictive rules, so it inherits a robustness to restrictive parametric assumptions. We first develop the methodology under standard unconfoundedness and then extend it to instrumental-variable designs that accommodate unobserved confounding, covering both marginal and conditional counterfactual distributions. On the theoretical side, we establish convergence of the underlying predictive recursions, and we illustrate the practical behaviour of the approach on several real datasets.

Several extensions are worth pursuing. Richer predictive rules could improve the quality of the counterfactual estimates. Foundation models for tabular data are a promising candidate: their autoregressive generation mirrors the predictive forward sampling, and \citet{ng_tabmgp_2026} show that a martingale posterior built on TabPFN \citep{hollmann2025accurate} performs well in many settings. Adapting such rules to our causal setting is a natural next step. A second direction is to consider influence function-based updates of the target functional, which could yield a predictive, martingale-posterior interpretation of influence function-based estimators in causal inference.

\section*{Acknowledgements}

We used large language models, in particular Anthropic's Claude Opus and Sonnet families, for coding assistance and proofreading. 
We are solely responsible for the content and any errors.

\bibliographystyle{apalike}
\bibliography{references}

\appendix

\section{Proofs}

\subsection{Proof of \Cref{thm:convergence}} \label{app:proof_convergence}

\paragraph{Setup.} Let $\mathcal{F}_i = \sigma(Y_{1:i}, X_{1:i}, W_{1:i})$ for $i \geq 1$ denote the filtration generated by the observed and imputed data. Write $P_i(y \mid x, w)$ and $P_i(w)$ for the random probability measures induced by the predictive rule at step $i$ for the conditional outcome distribution and the marginal covariate distribution respectively, with densities
\[
p_i(y \mid x, w) = p(y \mid x, w, \mathcal{F}_i), \qquad p_i(w) = p(w \mid \mathcal{F}_i)
\]
with respect to dominating measures $\mu_y$ (Lebesgue measure if $Y$ is continuous, a counting measure if discrete) and $\mu_w$ (a counting measure on the support of $W$ when $W$ is updated via the Bayesian bootstrap, Lebesgue measure otherwise). The interventional predictive density at step $i$ is, as in \Cref{eq:identification},
\begin{align} \label{eq:interventional_density_i}
    p_i(y(x)) := \int p_i(y \mid x, w)\, p_i(w)\, \mathrm{d}\mu_w(w).
\end{align}

The proof requires that each predictive rule individually satisfies the pointwise martingale property of \Cref{sec:predictive_rules}, that is, $\mathbb{E}[p_i(y \mid x, w) \mid \mathcal{F}_{i-1}] = p_{i-1}(y \mid x, w)$ for all $y, x, w$ and $\mathbb{E}[p_i(w) \mid \mathcal{F}_{i-1}] = p_{i-1}(w)$ for every $i \geq n+1$. This holds, for instance, when $p_i(y \mid x, w)$ follows the covariate-dependent copula update of \Cref{eq:copula_update} and $p_i(w)$ is updated via the Bayesian bootstrap. In addition, \Cref{thm:convergence} requires the boundedness conditions $\sup_i \sup_{y,x,w} p_i(y \mid x, w) \leq \bar{p} < \infty$ a.s.\ and $\sup_{u,v \in (0,1)} c_\rho(u,v) < \infty$ for the copula densities used to update $p_i(y \mid x, w)$.

\begin{proof}[Proof of \Cref{thm:convergence}]
Fix the treatment level $x$ and $i \geq n+1$. Conditioning on $\mathcal{F}_{i-1}$,
\begin{align} \label{eq:fubini_step}
    \mathbb{E}\!\left[p_i(y(x)) \mid \mathcal{F}_{i-1}\right] = \mathbb{E}\!\left[ \int p_i(y \mid x, w)\, p_i(w)\, \mathrm{d}\mu_w(w) \;\middle|\; \mathcal{F}_{i-1} \right] = \int \mathbb{E}\!\left[ p_i(y \mid x, w)\, p_i(w) \mid \mathcal{F}_{i-1} \right] \mathrm{d}\mu_w(w),
\end{align}
where the second equality is justified by Tonelli's theorem, which requires that the integrand $p_i(y \mid x, w)\, p_i(w)$ be non-negative and jointly measurable.

Write $p_i(y \mid x, w) = p_{i-1}(y \mid x, w) + \Delta_i(y \mid x, w)$ and $p_i(w) = p_{i-1}(w) + \Delta_i(w)$, where by the individual martingale property, $\mathbb{E}[\Delta_i(y \mid x, w) \mid \mathcal{F}_{i-1}] = 0$ pointwise for fixed $w$ and $x$ and $\mathbb{E}[\Delta_i(w) \mid \mathcal{F}_{i-1}] = 0$. Expanding the product, $p_{i-1}(y \mid x, w)$ and $p_{i-1}(w)$ are $\mathcal{F}_{i-1}$-measurable so that the cross-terms are zero, leaving
\[
\mathbb{E}\!\left[p_i(y \mid x, w)\, p_i(w) \mid \mathcal{F}_{i-1}\right] = p_{i-1}(y \mid x, w)\, p_{i-1}(w) + \mathbb{E}\!\left[\Delta_i(y \mid x, w)\, \Delta_i(w) \mid \mathcal{F}_{i-1}\right].
\]
Thus, we can rewrite \eqref{eq:fubini_step} as
\begin{align} \label{eq:covariance_term}
    \mathbb{E}\!\left[p_i(y(x)) \mid \mathcal{F}_{i-1}\right] - p_{i-1}(y(x)) = \int \mathbb{E}\!\left[ \Delta_i(y \mid x, w)\, \Delta_i(w) \mid \mathcal{F}_{i-1} \right] \mathrm{d}\mu_w(w) =: \gamma_i(x).
\end{align}
The term $\gamma_i(x)$ is the (integrated) conditional covariance between the two increments. The interventional density $p_i(y(x))$ is itself a martingale if and only if $\gamma_i(x) = 0$ for all $i$, which need not hold in general because the increment of the outcome density depends on the newly drawn covariate value $w_i$. It remains to bound $\gamma_i(x)$ by a summable sequence so that Theorem 1 in \cite{battiston2025bayesianpredictiveinferencemartingales} applies.

We verify summability for the default predictive rules of \Cref{sec:predictive_rules}. Under the Bayesian bootstrap, $p_i(w) = \frac{i-1}{i} p_{i-1}(w) + \frac{1}{i} 1\{w = w_i\}$, so
\begin{align} \label{eq:delta_w}
    \Delta_i(w) = \frac{1}{i}\left(1\{w = w_i\} - p_{i-1}(w)\right).
\end{align}
We can bound the absolute increments summed over the atoms of its support by
\begin{align} \label{eq:tv_bound}
    \int |\Delta_i(w)|\, \mathrm{d}\mu_w(w) = \frac{1}{i} \left[ \left(1 - p_{i-1}(w_i)\right) + \sum_{w \neq w_i} p_{i-1}(w) \right] = \frac{2}{i}\left(1 - p_{i-1}(w_i)\right) \leq \frac{2}{i}.
\end{align}
Under the copula update, the increment of the outcome density is
\[
    \Delta_i(y \mid x, w) = \alpha_i([x, w], [x_i, w_i]) \left( c_{\rho_y}\!\left(P_{i-1}(y \mid x, w), P_{i-1}(y_i \mid x_i, w_i)\right) - 1\right) p_{i-1}(y \mid x, w),
\]
where $\alpha_i([x, w], [x_i, w_i])$ is the covariate-dependent weight defined analogously to the regression extension in \Cref{sec:predictive_rules}, applied to the joint covariate vector $(x, w)$. Provided the bandwidths $\rho_x, \rho_y$ are bounded away from one, the Gaussian copula density is uniformly bounded, $\sup_{u,v \in (0,1)} c_\rho(u,v) =: \bar{c}_\rho < \infty$, and hence $\alpha_i([x, w], [x_i, w_i]) \leq \bar{\alpha}/i$ for a constant $\bar{\alpha} < \infty$ not depending on $(x, w, x_i, w_i)$. Combined with the boundedness condition $\sup_i \sup_{y,x,w} p_i(y \mid x, w) \leq \bar{p}$, this gives
\begin{align} \label{eq:delta_y_bound}
    \left|\Delta_i(y \mid x, w)\right| \leq \frac{\bar{\alpha}\, \bar{p}\, (\bar{c}_\rho \vee 1)}{i} =: \frac{\bar{\Delta}}{i} \qquad \text{a.s.}
\end{align}
Combining \eqref{eq:tv_bound} and \eqref{eq:delta_y_bound},
\[
    \gamma_i(x) \leq \int \left|\Delta_i(y \mid x, w)\right| \left|\Delta_i(w)\right| \mathrm{d}\mu_w(w) \leq \frac{\bar{\Delta}}{i} \int |\Delta_i(w)|\, \mathrm{d}\mu_w(w) \leq \frac{2\bar{\Delta}}{i^2} \qquad \text{a.s.},
\]
so $\xi_i = 2\bar{\Delta}/i^2$ satisfies $\sum_{i=n+1}^\infty \xi_i < \infty$. Thus $\mathbb{E}[p_i(y(x)) \mid \mathcal{F}_{i-1}] \leq p_{i-1}(y(x)) + \xi_i$ a.s.\ with $\sum_{i=n+1}^\infty \xi_i < \infty$, and by \citet[Theorem 1]{battiston2025bayesianpredictiveinferencemartingales} there exists a random probability measure $P_\infty(y(x))$ such that $P_i(y(x))$ converges weakly to $P_\infty(y(x))$ almost surely as $i \to \infty$, completing the proof.
\end{proof}

\begin{remark}
    \Cref{thm:convergence} is stated for the default predictive rules of \Cref{sec:predictive_rules}, but the argument is not tied to this particular choice: it applies to any pair of predictive rules for which similar summable rates can be established for the covariance term $\gamma_i(x)$. This includes, for example, the parametric plug-in update of \Cref{sec:predictive_rules}, whose increments are $\mathcal{O}(i^{-1})$ by construction.
\end{remark}

\subsection{Proof of \Cref{thm:iv_convergence}} \label{app:proof_iv_convergence}

We start by proving a lemma that guarantees that a product ratio of martingales satisfies the almost supermartingale property in \cite{battiston2025bayesianpredictiveinferencemartingales}.

\begin{lemma}
\label{lem:acid-ratio}
Let $\mathcal{F} = (\mathcal{F}_i)_{i \geq 1}$ be a filtration, and let $(a_i)_{i \geq 1}$, $(b_i)_{i \geq 1}$, $(c_i)_{i \geq 1}$ be $\mathcal{F}$-martingales such that:
\begin{enumerate}
    \item[(i)] there exist constants $A, B < \infty$ and $\varepsilon > 0$ with $|a_i| \leq A$, $|b_i| \leq B$, and $c_i \geq \varepsilon$ a.s.\ for all $i \geq 1$;
    \item[(ii)] there exist non-negative $\mathcal{F}$-adapted sequences $(\xi_i^a)_{i \geq 1}$, $(\xi_i^b)_{i \geq 1}$, $(\xi_i^c)_{i \geq 1}$, each a.s.\ summable, such that for all $i \geq 1$, a.s.,
    \begin{equation*}
        \mathbb{E}\big[ (a_{i+1} - a_i)^2 \,\big|\, \mathcal{F}_i \big] \leq \xi_i^a, \qquad
        \mathbb{E}\big[ (b_{i+1} - b_i)^2 \,\big|\, \mathcal{F}_i \big] \leq \xi_i^b, \qquad
        \mathbb{E}\big[ (c_{i+1} - c_i)^2 \,\big|\, \mathcal{F}_i \big] \leq \xi_i^c.
    \end{equation*}
\end{enumerate}
Then $Z_i := a_i b_i / c_i$ satisfies the (two-sided) almost-supermartingale bound: for all $i \geq 1$,
\begin{equation}
\label{eq:acid-ratio-bound}
    \big| \mathbb{E}\big[ Z_{i+1} - Z_i \,\big|\, \mathcal{F}_i \big] \big| \;\leq\; \xi_i \quad \text{a.s.},
    \qquad \text{where} \quad
    \xi_i := K \big( \xi_i^a + \xi_i^b + \xi_i^c \big), \quad
    K := \frac{1}{2\varepsilon} + \frac{3A + B}{2\varepsilon^2} + \frac{AB}{\varepsilon^3},
\end{equation}
and $\sum_{i=1}^{\infty} \xi_i < \infty$ a.s. In particular, $(Z_i)_{i \geq 1}$ is a $(\xi_i)_{i \geq 1}$-almost supermartingale with respect to $\mathcal{F}$.
\end{lemma}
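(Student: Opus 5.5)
We will prove the bound by a second-order Taylor expansion of the map $f(a,b,c) = ab/c$ around $(a_i,b_i,c_i)$, exploiting that the first-order term vanishes in conditional expectation because $a,b,c$ are martingales. Write the increments $\delta^a = a_{i+1}-a_i$, $\delta^b = b_{i+1}-b_i$, $\delta^c = c_{i+1}-c_i$. Rather than a Taylor expansion with a remainder evaluated at an intermediate point (which would need bounds on third derivatives over a region where $c$ could approach zero), we will use an exact algebraic decomposition. We write
\[
\frac{1}{c_{i+1}} = \frac{1}{c_i} - \frac{\delta^c}{c_i^2} + \frac{(\delta^c)^2}{c_i^2 c_{i+1}}.
\]
This is an identity, and since $c_{i+1}\ge\varepsilon$ and $c_i \geq \varepsilon$, the last term is bounded by $(\delta^c)^2/\varepsilon^3$. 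Likewise $a_{i+1}b_{i+1} = a_ib_i + a_i\delta^b + b_i\delta^a + \delta^a\delta^b$.

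Multiplying these out, $Z_{i+1}-Z_i$ becomes a sum of the following terms.
\begin{itemize}
\item Linear terms $a_i\delta^b/c_i$, $b_i\delta^a/c_i$ and $-a_ib_i\delta^c/c_i^2$. Their coefficients are $\mathcal{F}_i$-measurable, so they have zero conditional expectation by the martingale property.
\item Quadratic cross terms $\delta^a\delta^b/c_i$, $-a_i\delta^b\delta^c/c_i^2$, $-b_i\delta^a\delta^c/c_i^2$ and $a_ib_i(\delta^c)^2/(c_i^2c_{i+1})$.
\item Higher-order terms, containing $\delta^a\delta^b\delta^c$ and products of $(\delta^c)^2$ with $\delta^a$ or $\delta^b$ or both.
\end{itemize}
Each quadratic term is handled with $|uv|\le (u^2+v^2)/2$ together with the bounds $|a_i|\le A$, $|b_i|\le B$, $c\geq\varepsilon$. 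Taking conditional expectations then yields contributions such as $\frac{1}{2\varepsilon}(\xi^a+\xi^b)$ and $\frac{A}{2\varepsilon^2}(\xi^b+\xi^c)$.

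The higher-order terms are the main nuisance. To avoid cubic moments, which the hypotheses do not control, we will instead group the expansion so that every term is at most quadratic in increments times a bounded factor. Concretely, we write
\[
Z_{i+1}-Z_i = \frac{a_{i+1}b_{i+1}}{c_{i+1}} - \frac{a_ib_i}{c_i}
\]
and use $a_{i+1}b_{i+1}$ with $|a_{i+1}|\le A$, $|b_{i+1}|\le B$ kept intact where needed. For example, the term $\delta^a\delta^b\cdot(\text{correction to }1/c)$ can be absorbed by bounding $|\delta^b|\le 2B$ only in combination with an already-quadratic factor. An alternative is to write $a_{i+1}b_{i+1}/c_{i+1}$ with the $(\delta^c)^2$ remainder multiplied by the bounded quantity $a_{i+1}b_{i+1}$. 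This gives the term $a_{i+1}b_{i+1}(\delta^c)^2/(c_i^2c_{i+1})$, bounded by $AB(\delta^c)^2/\varepsilon^3$, which matches the $AB/\varepsilon^3$ in $K$.

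So the cleaner grouping will be
\[
Z_{i+1}-Z_i = \frac{a_{i+1}b_{i+1}-a_ib_i}{c_i} - \frac{a_{i+1}b_{i+1}\,\delta^c}{c_i^2} + \frac{a_{i+1}b_{i+1}(\delta^c)^2}{c_i^2c_{i+1}}.
\]
We then expand $a_{i+1}b_{i+1}$ around $a_ib_i$ only inside the middle term, which produces $-(a_i\delta^b + b_i\delta^a + \delta^a\delta^b)\delta^c/c_i^2$.

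The final step collects all contributions and checks that the coefficients are dominated by the constant $K$ stated in the lemma. The remaining cubic piece $\delta^a\delta^b\delta^c/c_i^2$ is controlled by $|\delta^a|\le 2A$ and Young's inequality, giving a contribution of the form $\frac{A}{\varepsilon^2}((\delta^b)^2+(\delta^c)^2)$. This is consistent with the $(3A+B)/(2\varepsilon^2)$ coefficient appearing in $K$.

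Summability of $\xi_i$ is then immediate, since it is a finite combination of a.s. summable sequences. The two-sided bound in particular gives $\mathbb{E}[Z_{i+1}\mid\mathcal{F}_i]\le Z_i+\xi_i$, so $(Z_i)$ is a $(\xi_i)$-almost supermartingale.
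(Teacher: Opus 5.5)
Your proposal is correct and follows essentially the same route as the paper. It uses the exact identity $1/c_{i+1} = 1/c_i - \delta^c/c_i^2 + (\delta^c)^2/(c_i^2 c_{i+1})$ with the bounded factor $a_{i+1}b_{i+1}$ kept intact in the remainder, the vanishing of the three linear terms by the martingale property, and the absorption of the cubic term $\delta^a\delta^b\delta^c/c_i^2$ via $|\delta^a|\le 2A$ (the source of the $3A$ in $K$). The only cosmetic difference is that you apply $|uv|\le (u^2+v^2)/2$ pointwise, where the paper uses conditional Cauchy--Schwarz followed by AM--GM; both give the same coefficients, each dominated by $K$.
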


\begin{proof}[Proof of Lemma~\ref{lem:acid-ratio}]
Fix $i \geq 1$ and write $\Delta a := a_{i+1} - a_i$, $\Delta b := b_{i+1} - b_i$, $\Delta c := c_{i+1} - c_i$. Expanding $a_{i+1} b_{i+1} = (a_i + \Delta a)(b_i + \Delta b)$ and $1/c_{i+1} = 1/c_i - \Delta c / c_i^2 + (\Delta c)^2 / (c_i^2 c_{i+1})$ yields the exact identity
\begin{equation*}
    Z_{i+1} - Z_i
    = \frac{b_i \Delta a + a_i \Delta b}{c_i} - \frac{a_i b_i}{c_i^2} \Delta c
    \;+\; \frac{\Delta a \Delta b}{c_i}
    - \frac{\big( b_i \Delta a + a_i \Delta b + \Delta a \Delta b \big) \Delta c}{c_i^2}
    + \frac{a_{i+1} b_{i+1}}{c_i^2 c_{i+1}} (\Delta c)^2 .
\end{equation*}
Take conditional expectations given $\mathcal{F}_i$. The first three terms on the right-hand side are linear in a single increment with $\mathcal{F}_i$-measurable coefficients, and hence vanish by the martingale property of $(a_i)$, $(b_i)$, $(c_i)$. Applying the triangle inequality and bounding the remaining terms in absolute value via Assumption (i), using $c_i, c_{i+1} \geq \varepsilon$, $\lvert a_{i+1} b_{i+1} \rvert \leq AB$, and $\lvert \Delta a \rvert \leq 2A$ (as $\lvert a_i \rvert, \lvert a_{i+1} \rvert \leq A$),
\begin{equation*}
    \big| \mathbb{E}\big[ Z_{i+1} - Z_i \,\big|\, \mathcal{F}_i \big] \big|
    \leq \frac{1}{\varepsilon}\, \mathbb{E}\big[ \lvert \Delta a \Delta b \rvert \,\big|\, \mathcal{F}_i \big]
    + \frac{B}{\varepsilon^2}\, \mathbb{E}\big[ \lvert \Delta a \Delta c \rvert \,\big|\, \mathcal{F}_i \big]
    + \frac{3A}{\varepsilon^2}\, \mathbb{E}\big[ \lvert \Delta b \Delta c \rvert \,\big|\, \mathcal{F}_i \big]
    + \frac{AB}{\varepsilon^3}\, \mathbb{E}\big[ (\Delta c)^2 \,\big|\, \mathcal{F}_i \big].
\end{equation*}
By the conditional Cauchy--Schwarz inequality, the inequality of arithmetic and geometric means, and Assumption (ii), $\mathbb{E}[ \lvert \Delta a \Delta b \rvert \mid \mathcal{F}_i ] \leq (\xi_i^a + \xi_i^b)/2$, and analogously for the other mixed terms. Collecting coefficients and bounding each by $K$ gives \eqref{eq:acid-ratio-bound}. The sequence $(\xi_i)_{i \geq 1}$ is non-negative and $\mathcal{F}$-adapted since $\xi_i^a$, $\xi_i^b$, $\xi_i^c$ are, and it is a.s.\ summable as a finite linear combination of a.s.\ summable sequences.
\end{proof}

Now, we use \Cref{lem:acid-ratio} to prove the main theorem.

\begin{proof}[Proof of \Cref{thm:iv_convergence}]
Let $\mathcal{F}_i = \sigma(Y_{1:i}, X_{1:i}, Z_{1:i})$ for $i \geq 1$ denote the filtration generated by the observed and imputed data. We show that the predictive sequence
\begin{align}
p_i(y(0) \mid CP) &= \frac{p_{NT, i} + p_{CP, i}}{p_{CP, i}} p_i(y \mid X = 0, Z = 0) - \frac{p_{NT, i}}{p_{CP, i}} p_i(y \mid X = 0, Z = 1) \nonumber \\
&= \frac{p_{NT, i}}{p_{CP, i}} \left[p_i(y \mid X = 0, Z = 0) - p_i(y \mid X = 0, Z = 1) \right] + p_i(y \mid X = 0, Z = 0) \label{eq:complier_sequence}
\end{align}
for $i \geq 1$ satisfies the conditions in \cite{battiston2025bayesianpredictiveinferencemartingales} and therefore has a well-defined limiting measure. The proof for $p(y(1) \mid CP)$ is analogous.

The numerator and denominator in the weights term of \Cref{eq:complier_sequence} are linear functionals of $p_i(x \mid z)$ and inherit the martingale property of the treatment update. Similarly, the difference of outcome densities remains a martingale. Thus, it is only left to show that the assumptions of \Cref{lem:acid-ratio} are satisfied for the left term.

Assumption (i) holds with $A = 1$ (the weights are probabilities), $B = \bar{p}$ (the uniform outcome-density bound from the proof of \Cref{thm:convergence}), and $c_i = p_{CP,i} \geq \varepsilon > 0$ by the monotonicity assumption (Assumption \ref{ass:iv_monotonicity}). For Assumption (ii), the Bayesian bootstrap increments are bounded by $1/i$ and thus the squared increments are summably bounded. For the copula update, the bound $|\Delta_i(y \mid x, z)| \leq \bar{\Delta}/i$ from the proof of \Cref{thm:convergence} gives a summable $\bar{\Delta}^2/i^2$ bound on the squared outcome-density increments.

Thus, \Cref{lem:acid-ratio} applies and the first term in \Cref{eq:complier_sequence} is an almost supermartingale with summable errors. Adding a martingale term preserves the property, so its random measure converges weakly almost surely by \citet[Theorem 1]{battiston2025bayesianpredictiveinferencemartingales}.

\end{proof}

\section{An extension to general instrumental variable models} \label{sec:iv_extension}

Consider the outcome $Y \in \mathcal{Y} \subseteq \mathbb{R}$, the treatment $X \in \mathcal{X} \subseteq \mathbb{R}$, and a vector of instruments $Z \in \mathcal{Z} \subseteq\mathbb{R}^p$ generated by the structural causal model
\begin{align}
    \begin{aligned}
        Y &= f(X) +\epsilon \\
        X &= g(Z) + \eta,
    \end{aligned}
\end{align}
where $\epsilon$ and $\eta$ are correlated error terms. Additional (exogenous) covariates can be added as arguments of both $f(\cdot)$ and $g(\cdot)$, in which case we have the nonlinear and heterogeneous model of \cite{spanbauer_flexible_2024}, which they analyse using  Bayesian additive regression trees (BART). Compared to \cite{holovchak_distributional_2025}, our proposed approach is slightly less general as we restrict ourselves to additive error terms, whereas they can consider arbitrary monotone functions. However, our method naturally quantifies epistemic uncertainty about the interventional distribution.

Under the assumption that $Z$ is exogenous, {i.e.}~that $Z \ind (\epsilon, \eta)$, we can decompose the error $\epsilon$ into a conditional expectation and an additive noise term $\epsilon = \E{\epsilon \mid \eta} + \nu$, where $\nu$ is independent of $\eta$ (and thus $X$). The observational model can then be written as
\begin{align*}
Y = f(X) + \E{\epsilon \mid \eta} + \nu.
\end{align*}
If the structural residuals follow a bivariate Gaussian distribution, the control function $\E{\epsilon \mid \eta}$ is a linear function of $\eta$, but we allow for more general dependence structures. This motivates the following strategy to estimate the interventional distribution: we estimate $g$ in the first-stage regression and obtain an estimate of $\eta$, which we use to update the observational distribution $p(y \mid x, \eta)$. The interventional distribution is then obtained by integrating over the marginal distribution of $\eta$
\begin{align*}
    p(y(x)) = \int p(y \mid x, \eta) p(\eta) \mathrm{d}\eta.
\end{align*}
Heuristically, this strategy is justified if the treatment assignment is ignorable conditional on an unobserved confounder and the residual $\eta$ provides a good approximation to that unobserved confounder. In addition, we require sufficient overlap so that the probability of receiving any level of treatment is positive for all values of the control function.

This strategy can be easily implemented within the martingale posterior framework. Given observed data $\{(y_i, x_i, z_i)\}_{i=1}^n$, we estimate $\hat{g}_n$, $\{\eta_i\}_{i=1}^n$, and $p_n(y \mid x, \eta)$. In each predictive sequence, we resample the instruments $Z$ and the treatment $X$ with the Bayesian bootstrap, update the structural function $\hat{g}_{i}$ to $\hat{g}_{i+1}$, estimate $\eta_{i+1}$, and then update $p_i(y \mid x, \eta)$ to $p_{i+1}(y \mid x, \eta)$. The interventional distribution is obtained by averaging the final conditional density over the empirical distribution of the recovered residuals:
\begin{align*}
p_N(y(x)) \approx \frac{1}{N} \sum_{i=1}^N p_N(y \mid x, \eta_i).
\end{align*}
If we assume that the first-stage function is linear, potentially using some basis expansion $\Phi: \mathcal{Z} \to \mathbb{R}^m$ such that $g(Z) = \Phi(Z)^\intercal \beta$, then the predictive update can be performed by recursively updating the least-squares estimate. 

\subsection{Simulated example}

We illustrate the general IV construction on simulated data generated from the structural model
\begin{align*}
  Y &= f(X) + \varepsilon \\ 
  X &= \pi Z + \eta
\end{align*}
with $f(x) = 2\tanh(1.5x)$ and $Z \sim \mathrm{N}(0, 1)$. The residuals are simulated as $\eta \sim \mathrm{N}(0, 1)$ and $\varepsilon = c_1 \eta + c_2(\eta^2 - 1) + \nu$, where $\nu \sim \mathrm{N}(0, \sigma_\nu^2)$ is independent of $(Z, \eta)$. We set $\pi = 1$, $c_1 = 0.7$, $c_2 = 0.35$, and fix $\sigma_\nu^2 = 1 - c_1^2 - 2c_2^2 = 0.265$ so that $\operatorname{Var}(\varepsilon) = 1$. This results in a right-skewed interventional density $p(y(x))$, whereas the observational distribution is more symmetric. The marginal densities are not available in closed form, but can be easily evaluated numerically to obtain the ground truth.

We generate a single sample of $n = 500$ observations and target the interventional distribution at the treatment level $x = -1$. For comparison, we also report the martingale posterior for the observational density $p(y \mid X = x)$, which ignores the instrument. Figure~\ref{fig:iv_sim_illustration} shows that our proposed method recovers both densities well. In particular, it captures the additional skewness of the interventional distribution.

\begin{figure}
    \centering
    \includegraphics[width=0.8\linewidth]{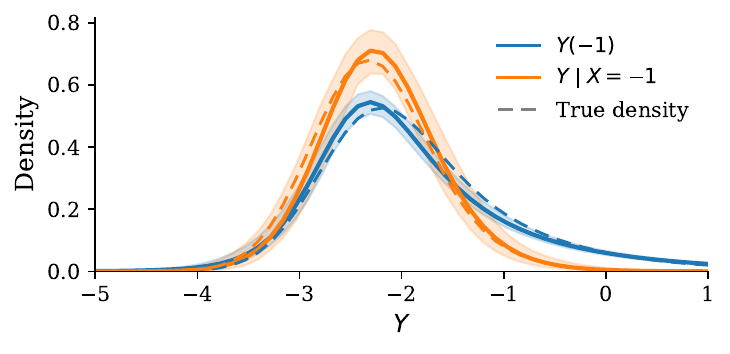}
    \caption{\textbf{Simulated data.} Martingale posterior estimates (with $95\%$ credible intervals) of the interventional densities $p(y(x))$ and the observational density $p(y \mid x)$ at $x = -1$. The results are based on a single sample of size $n = 500$ and $B = 100$ predictive sequences of $2{,}000$ forward samples each.}
    \label{fig:iv_sim_illustration}
\end{figure}


\end{document}